\newcommand{\bra}[1]{\langle #1|}
\newcommand{\ket}[1]{|#1\rangle}
\newtheorem{theorem}{Theorem}
\newtheorem{definition}[theorem]{Definition}
\newtheorem{lemma}[theorem]{Lemma}
\definecolor{orange}{rgb}{1,0.66,0}
\begin{document}

\title{Changing the circuit-depth complexity of measurement-based quantum computation with hypergraph states}

\author{Mariami Gachechiladze}

\address{Naturwissenschaftlich-Technische Fakult\"at,
Universit\"at Siegen, 57068 Siegen, Germany}

\author{Otfried G\"uhne}
\address{Naturwissenschaftlich-Technische Fakult\"at,
Universit\"at Siegen, 57068 Siegen, Germany}

\author{Akimasa  Miyake}

\address{Center for Quantum Information and Control, 
Department of Physics and Astronomy,
University of New Mexico, Albuquerque, NM 87131, USA}

\date{\today}

\begin{abstract}
While the circuit model of quantum computation defines its logical depth  or ``computational time'' in terms of temporal gate sequences, the measurement-based model could allow totally different temporal ordering and parallelization of logical gates. By developing techniques to analyze Pauli measurements on multi-qubit hypergraph states generated by the Controlled-Controlled-Z ($CCZ$) gates, we introduce a {\em deterministic} scheme of universal measurement-based computation. In contrast to the cluster-state scheme where the Clifford gates  are parallelizable, our scheme enjoys massive parallelization of $CCZ$ and \textit{SWAP} gates, so that the computational depth grows with the number of global applications of Hadamard gates, or, in other words, with the number of changing computational bases. A logarithmic-depth implementation of an $N$-times Controlled-Z  gate illustrates a novel trade-off between space and time complexity. 
\end{abstract}

\pacs{}
\maketitle


\section{Introduction}
A typical way to build a computer, classical or quantum, is to first realize 
a certain set of elementary gates which can then be combined to perform 
algorithms. The set of gates is called universal if arbitrary algorithms 
can be implemented. Consequently, the concept of universality is fundamental 
in computer science. While the most common choice for the universal gate set 
in quantum circuits is a two-qubit entangling gate supplemented by certain 
single-qubit gates \cite{Nielsen2000}, the universal gate set given by the 
three-qubit Toffoli gate [or the Controlled-Controlled-Z ($CCZ$) gate for our case] and 
the one-qubit Hadamard ($H$) gate \cite{Shi2002, Aharonov2003}  
is fascinating for several reasons.

First, the Toffoli gate alone is already universal for reversible classical computation. 
Consequently, the set may give insight into fundamental questions about 
the origin of quantum computational advantage, in the sense that changing the 
bases among complementary observables (by the Hadamard gates) brings  power 
to quantum  computation \cite{Shi2003, Dawson2005, Shepherd2006, Shepherd2010, 
Montanaro2016}. Second, this gate set allows certain transversal implementations  
of fault-tolerant universal quantum computation using topological error correction codes.
Transversality means that, in order to perform gates on the encoded logical qubits, 
one can apply corresponding gates to the physical qubits in a parallel fashion, and this convenience has sparked recent 
interest on this gate set \cite{Paetznick2013, Kubica2015, Yoshida2016, Yoshida2017, 
Yoder2017, Vasmer2018}. Third, the many-body entangled states generated by the $CCZ$ 
gates  are known as hypergraph states in entanglement theory \cite{Qu2013, Rossi2013, Guehne2014, Morimae2017a, Lyons2017}. 
They found applications in quantum algorithms \cite{Rossi2014} and 
Bell inequalities \cite{Gachechiladze2016}. Furthermore, as discussed below, they were recently utilized in measurement-based 
quantum computation (MBQC) \cite{Miller2016, Miller2018}, because they overlap with renormalization-group fixed-point states of 2D symmetry-protected topological orders with global $\mathbb{Z}_2$ symmetry \cite{Chen2013}.

Motivated by these observations, we introduce a \textit{deterministic} scheme of MBQC for the gate set 
of $\{CCZ, H\}$, using multi-qubit hypergraph states. MBQC is a scheme of quantum 
computation where first a highly-entangled multi-particle state is created as a 
resource, then the computation is carried out by performing local measurements 
on the particles only  \cite{Raussendorf2001, Raussendorf2003}. Compared with the 
canonical model of MBQC using cluster states \cite{Briegel2001} generated by 
Controlled-Z ($CZ$) gates, our scheme allows to extend substantially 
several key aspects of MBQC, such as the set of parallelizable gates and the 
byproduct group to compensate randomness of measurement outcomes (see \cite{Gross2007a, Gross2007b, Gross2010} for 
previous extensions using tensor network states). Although 2D ground 
states with certain symmetry-protected topological orders (SPTO) have been shown to be universal for 
MBQC \cite{Miller2016, Miller2018,Chen2018}, our construction has a remarkable feature that it 
allows {\it deterministic} MBQC, where the layout of a simulated quantum circuit can 
be predetermined.  As a resource state, we consider hypergraph states built only from $CCZ$ unitaries. This is because (i) these states have a connection to genuine 2D SPTO, (ii) it is of fundamental interest if $CCZ$ unitaries alone are as powerful as  common hybrid resources by $CCZ$ (or so-called non-Clifford elements) and $CZ$ unitaries, and (iii) they might be experimentally relevant since it requires only one type of the entangling gate, albeit a three-body interaction (cf.\citep{Monz2009, Lanyon2009, Fedorov2012,Figgatt2017}). On a technical novelty, we derive a complex graphical rule for Pauli-$X$ basis measurements on general hypergraph states, which allows a deterministic MBQC protocol on a hypergraph state, for the first time. The rule may find independent applications in deriving entanglement witnesses \cite{Toth2005,  Gach2017}, nonlocality proofs \cite{Scanari2005, Cabello2007, Gachechiladze2016}, and verification \cite{Morimae2017a, Takeuchi2018, Zhu2018} for a large class of hypergraph states. 

As a remarkable consequence of deterministic MBQC, we demonstrate an $N$-qubit generalized Controlled-Z ($C^NZ$) gate, a key logical gate for quantum algorithms such as the unstructured database search  \cite{Molmer2011}, in a depth logarithmic 
in $N$.  Although relevant logarithmic implementations of $C^NZ$  have been studied in Refs.~\cite{Selinger2013, Maslov2016, Motzoi2017}, we highlight a trade-off between space and time complexity in MBQC, namely, reducing exponential ancilla qubits to a polynomial overhead on the expense of  increasing  time complexity  from a constant depth to a logarithmic depth, in this example.  


\begin{table}[t]
\resizebox{0.95\columnwidth}{!}{\begin{tabular}{| c | c | c |}
\hline 
&$\quad$Cluster State$\quad$ & $\quad$ Hypergraph State$\quad$\tabularnewline
\hline 
\hline 
Preparation gates & $CZ\in\mathcal{C}_{2}$ & $CCZ\in\mathcal{C}_{3}$\tabularnewline
\hline 
Measurements & Pauli $+\ \mathcal{C}_{2}$ &  Pauli \tabularnewline
\hline 
\multirow{2}{*}{Implemented gates } & $\downarrow\quad\quad\quad\downarrow$ & $\downarrow$\tabularnewline
 &$\mathcal{C}_{2}\quad\quad\quad\mathcal{C}_{3}$ & $CCZ,\, H$\tabularnewline
\hline 
Byproduct  &$\{X,Z\}$ &$\{CZ,X,Z\}$\\
\hline 
Parallelized gates &$\mathcal{C}_{2}$ &$\{CCZ^{nn},\, \mbox{\textit{SWAP}}\}$ \\
\hline 
\end{tabular}}
\caption{Features of MBQC schemes using cluster and hypergraph states. 
Our scheme with a hypergraph state implements all logical $CCZ$ and 
\textit{SWAP} gates without adaptation of measurements, leading to a 
massive parallelization of these.}
\label{table-1}
\end{table}

\section{Summary of the computational scheme} 
In MBQC, an algorithm is executed by performing local measurements 
on some entangled resource state.  Consequently, two  different physical 
resources, the entangling gates needed to prepare the state and the required
class of measurements, characterize the MBQC scheme. To provide a fine-grained 
classification, let us define the Clifford hierarchy of unitary gates 
\cite{Gottesman1999}. The unitary gates in the $k$-th level of the Clifford 
hierarchy $\mathcal{C}_k$ are defined inductively, with $\mathcal{C}_1$ 
consisting of tensor products of Pauli operators, and 
$\mathcal{C}_{k+1} = \{ U |\, \forall P \!\in\! \mathcal{C}_1, U P U^\dagger 
\in \mathcal{C}_k \}$. The gates in $\mathcal{C}_2$ form the so-called Clifford 
group, preserving the  Pauli group operators  under conjugation. They allow an 
efficient classical simulation if the initialization and read-out measurements 
are performed in the Pauli bases \cite{Gottesman1998}.

There are three relevant aspects in the complexity of MBQC.
First: the adaptation of measurement bases, namely whether 
the choice of some measurement bases depends on the results of previous measurements. Second: the notion of parallelism and 
logical depth (cf. \cite{Moore2001, Terhal2004}) in terms of the ordering of measurements. Third: due to intrinsic randomness in the measurement outcomes, there are byproduct operators sometimes to be corrected. In the canonical scheme of MBQC using the cluster state,  Pauli measurements implement Clifford gates in $\mathcal{C}_2$ without  adaptation of measurement bases, so these 
gates are parallelized.  As Clifford gates are not universal, more general measurements in the $X$-$Y$-plane of the Bloch sphere must be performed to generate unitaries in 
$\mathcal{C}_3$. The byproduct group is generated by the Pauli operators $X$ and $Z$.

\begin{figure}[t]
\centering
\includegraphics[width=1\columnwidth]{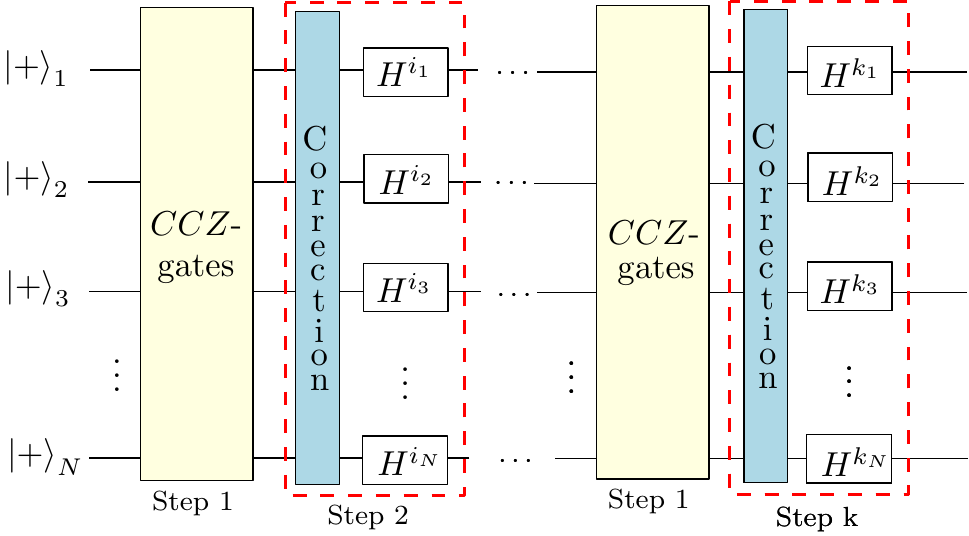}
\caption{Any quantum computation can be described as alternative applications of 
logical $CCZ$ and Hadamard gates. Our MBQC scheme allows a parallelization of
all logical $CCZ$ (namely,  $CCZ^{nn}$ and \textit{SWAP}) gates  and each Hadamard layer increments computational depth, as it  requires adaptation of 
measurement bases to correct prior byproducts.}
\label{fig:Circuit_Steps}
\end{figure}

Our scheme, however, has several key differences summarized in Table \ref{table-1}.
Our state is prepared using  $CCZ$ gates ($CCZ\in \mathcal{C}_3$), but  Pauli measurements alone are  sufficient for universal computation. We choose $\{CCZ, H\}$ to be the logical gate set for universal computation. Indeed, we can implement all logical $CCZ$ gates at {\em arbitrary} distance in parallel, by showing  that nearest neighbor $CCZ$ gates  ($CCZ^{nn}$) and \textit{SWAP} gates are applicable without adaptation. Our implementation generates the group of byproduct operators $\{CZ,X,Z\}$, 
which differs from the standard byproduct group. Since we need Hadamard gates to 
achieve universality and our byproduct group is not closed under the conjugation 
with the Hadamard gate,  we need to correct all $CZ$ byproducts before the Hadamard 
gates. Thus, the logical depth grows according to the number of global applications 
of Hadamard gates, effectively changing the computational bases (see 
Fig.~\ref{fig:Circuit_Steps}).

\section{ Hypergraph states and novel measurement rules}
Hypergraph states are generalizations of multi-qubit graph states. A hypergraph state 
corresponds to a hypergraph $H=(V,E)$, where $V$ is a set of vertices (corresponding
to the qubits) and $E$ is a set of hyperedges, which may connect more than two vertices 
(see Fig.~\ref{fig:denote} for an example). The hyperedges correspond to interactions
required for the generation of the state, as the state is defined as
\begin{equation}
\ket{H}=\prod_{e\in E}C_e\ket{+}^{\otimes |V|},
\end{equation}
where  the $C_e$'s are  generalized $CZ$ gates, 
$C_e=\mathbbm{1}-2\ket{1\dots 1}\bra{1\dots 1}$ 
acting on the Hilbert space associated to  $|e|$ 
qubits and $\ket{+}$ is a single-qubit eigenstate of the 
Pauli-$X$ observable. Hypergraph states  created by only 
three-qubit $CCZ$ gates are called three-uniform. 

In MBQC protocols  $CZ$ unitaries guarantee information flow via perfect teleportation \cite{Briegel2001, Raussendorf2003}.  Obtaining $CZ$ gates with an unit probability from  three-uniform hypergraph states has been  a challenge as  Pauli-$Z$ measurements always give $CZ$ gates  probabilistically. Therefore, only probabilistic or hybrid (where $CCZ$ and $CZ$ gates are available on demand)  scenarios have been considered in the literature \cite{Vasmer2018, Miller2016, Chen2018}. However, using a novel non-trivial Pauli-$X$ measurement rule on  three-uniform hypergraph states, we  achieve deterministic teleportation via projecting on $CZ$ gates with unit  probability. 

Note that Pauli-$X$ measurement on a graph state always projects onto a graph state, up to local unitary transformations \cite{Hein2006}. For hypergraph states, only  Pauli-$Z$ measurement rule is known \cite{Guehne2014}, while Pauli-$X$ measurements 
lead, in general, out of the hypergraph state space. In the Appendix A,  we give a sufficient criterion and a rule  for Pauli-$X$ measurements  to map hypergraph states to hypergraph states.  This rule for general hypergraph states  entirely captures the known graph state case. It can be derived by the well-known local complementation rule generalized for hypergraph states \cite{Gach2017}. Here we only give couple of examples needed later for MBQC protocol (See Appendix for more).

 For ease of notation, we draw a box instead of three vertices $V=\{1,2,3\}$ and connect it with an 
edge to another vertex $k \, (\geq 4)$  [see Fig.~\ref{fig:denote} (a)], if every two out of those three vertices are in a three-qubit hyperedge with the vertex $k$.   In addition, we say that a box is measured in the $\mathcal{M}$-basis if all three qubits $\{1,2,3\}$ 
are measured in the  $\mathcal{M}$-basis [see Fig.~\ref{fig:denote} (b), where 
$\mathcal{M}=X$]. The main two examples of measurement rules are presented in Fig. \ref{fig:outcome1} (a) and (b), where the 
post-measurement states are graph states with unit probability. By direct 
inspection one can  check that there are only two possible local Clifford 
equivalent post-measurement states when  $\mathcal{M}=X$.

\begin{figure}[t]
\includegraphics[width=1.0\columnwidth]{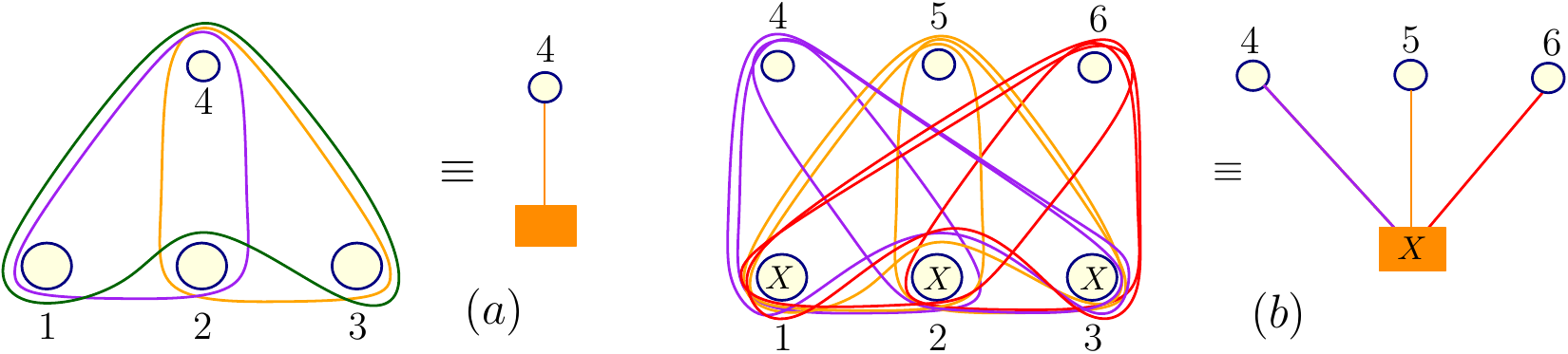} 
\caption{(a) Denoting the four-qubit hypergraph state  with hyperedges 
$E= \{\{1,2,4\},\{2,3,4\},\{1,3,4\}\}$ with the vertex and the box.  
(b)  Pauli-$X$ measurements on vertices $1,2,3$  by Pauli-$X$ measurement 
on the box.}
\label{fig:denote}
\end{figure}

\begin{figure}[t]
\includegraphics [width=0.95\columnwidth]{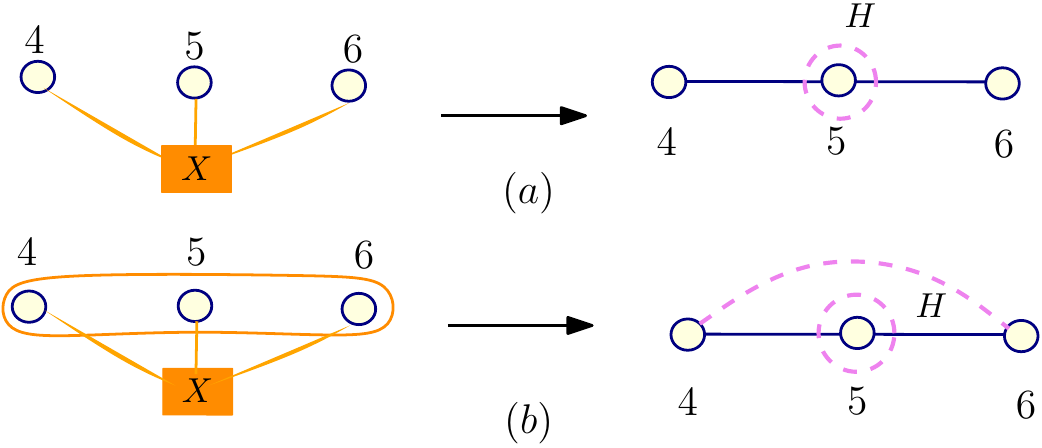}
\caption{
Pauli-$X$-measurements on the given hypergraph states result in graph states, 
with a Hadamard gate applied to its vertex $5$. All dashed lines (depicting 
byproducts)  appear additionally if the product of measurement outcomes on 
vertices $1,2,3$ is   $-1$. 
(a)  Pauli-$Z$ byproduct. 
(b)  Pauli-$Z$ and $CZ$ byproducts.}
\label{fig:outcome1}
\end{figure}

\section{ Universal resource state and  MBQC scheme}

\begin{theorem}
Based on the hypergraph state of Fig.~\ref{fig:resource}~(a), we propose MBQC with the following features:
(i) it is universal using only Pauli measurements,
(ii) it is deterministic, 
(iii) it allows parallel implementations of all logical $CCZ$ and \textit{SWAP} gates, among the universal gate set by $CCZ$, \textit{SWAP}, and Hadamard gates, and
(iv) its computational logical depth is the number of global layers of logical Hadamard gates. 
\label{theorem1}
\end{theorem}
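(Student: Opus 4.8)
The plan is to prove Theorem~\ref{theorem1} constructively, by exhibiting the explicit local-measurement pattern on the resource state of Fig.~\ref{fig:resource}~(a) and then verifying the four claimed properties one at a time. First I would fix the encoding: logical qubits live on the horizontal ``wires'' of the lattice, and logical information is teleported along a wire by Pauli measurements, exactly as in the cluster-state model, while the boxes (the three-vertex blocks of Fig.~\ref{fig:denote}) mediate the entangling interactions. The backbone of the whole argument is the deterministic Pauli-$X$ rule of Fig.~\ref{fig:outcome1}, which replaces the probabilistic $Z$-measurement route to $CZ$: measuring a box in the $X$ basis projects, with unit probability, onto a graph state carrying the intended edge, up to byproducts drawn from $\{CZ,X,Z\}$ and a fixed Hadamard on the output vertex. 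Establishing determinism (property (ii)) then reduces to showing that every measurement outcome leaves the implemented logical gate unchanged and only alters the accompanying byproduct in a trackable way.

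Next I would treat the elementary gadgets separately. For the nearest-neighbour $CCZ^{nn}$ gadget and the SWAP gadget I would apply the rule of Fig.~\ref{fig:outcome1} together with the known $Z$-rule of Ref.~\cite{Guehne2014} to read off the implemented logical unitary and its byproduct. The key point to check here is that these gadgets use a \emph{fixed} set of Pauli measurements whose bases depend on no earlier outcome; this gives property (iii), since non-adaptive measurements can all be scheduled in a single parallel round, regardless of how far apart the logical qubits are. Because the resource state itself is built from $CCZ$ gates, the only non-Clifford content is baked into the state, and the Hadamard gadget can be realized on a wire using Pauli measurements alone, which is what makes property (i) compatible with a Pauli-only measurement set.

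The crux---and the step I expect to be the main obstacle---is controlling the byproduct algebra. I would verify closure explicitly: conjugating a Pauli byproduct $X$ through a logical $CCZ$ produces a $CZ$ on the two remaining wires (this is precisely the statement $CCZ\in\mathcal{C}_3$, mapping $\mathcal{C}_1$ into $\mathcal{C}_2$), which forces the byproduct group to be enlarged to $\{CZ,X,Z\}$, and I would then confirm that this group is closed under conjugation by $CCZ$ and SWAP (both being diagonal, resp.\ permuting, on the relevant qubits). Because this group is \emph{not} closed under conjugation by the Hadamard gate---a $CZ$ byproduct would become a controlled-$X$ outside the group---any accumulated $CZ$ byproduct must be removed before a logical $H$ is applied, and removing it requires feeding forward previously recorded outcomes, i.e.\ adapting the measurement bases at the Hadamard layer. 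This feed-forward is the only source of temporal ordering in the scheme, which yields property (iv): the logical depth equals the number of global Hadamard layers.

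Finally I would assemble the pieces. Since an arbitrary circuit decomposes into alternating layers of $CCZ$/SWAP and global $H$ (Fig.~\ref{fig:Circuit_Steps}), and $\{CCZ,H\}$ is universal \cite{Shi2002,Aharonov2003} with SWAP supplying the routing, the scheme inherits universality (i); determinism and parallelization follow from the gadget analysis; and the depth count follows from the adaptation argument. The one routine-but-delicate verification I would defer to an appendix is the complete byproduct bookkeeping across a generic layer, confirming that the feed-forward corrections before each Hadamard indeed suffice to render the logical output deterministic and independent of the random measurement outcomes.
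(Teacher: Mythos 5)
Your proposal follows essentially the same route as the paper: the deterministic Pauli-$X$ box rule as the engine, non-adaptive $CCZ^{nn}$ and \textit{SWAP} gadgets giving parallel long-range $CCZ$'s, the byproduct group $\{CZ,X,Z\}$ closed under $CCZ$ and \textit{SWAP} but not under $H$, and hence one unit of adaptive depth per global Hadamard layer. The only place you are noticeably less specific than the paper is the Clifford part: the paper does not implement \textit{SWAP}, $H$, and the $CZ$-corrections ``on a wire'' directly, but first deterministically strips all residual hyperedges and all vertices that could carry $CZ$ byproducts so as to land on a clean hexagonal-lattice graph state (where Pauli measurements suffice for any Clifford gate); this reduction is the second half of the determinism argument, and the paper also notes that each correction round reintroduces $CZ$ byproducts via $C_{abc}X_a = X_a C_{abc}C_{bc}$, so the corrections must be chained adaptively rather than verified once.
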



\begin{figure}[t]
\centering
\includegraphics[width=1.0\columnwidth]{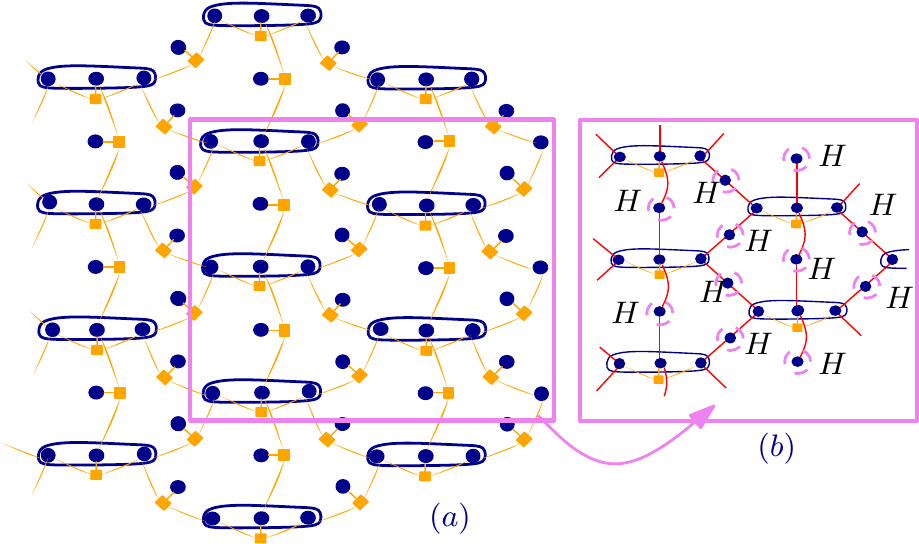}
\caption{(a) The universal resource state composed of elements 
on Fig. \ref{fig:outcome1} (a)  and (b). (b)   Resource
state obtained after measuring all boxes in Pauli-$X$ bases, except  
the ones attached to three qubits surrounded by a hyperedge. All 
dashed circles represent Pauli-$Z$ byproducts. }  
\label{fig:resource}
\end{figure}

We discuss the points in Theorem~\ref{theorem1} individually: \textit{(i)~Universality with Pauli measurements  only:}   
For the universal gate set we choose $CCZ$ and Hadamard gates.  We realize the $CCZ$ gate on arbitrary qubits in two steps: a nearest neighbor $CCZ$ gate ($CCZ^{nn}$) and a \textit{SWAP} gate, swapping an order of inputs. Here we assume that information flows from the bottom to the top.

\begin{figure}[b]
\centering
\includegraphics[scale=0.9]{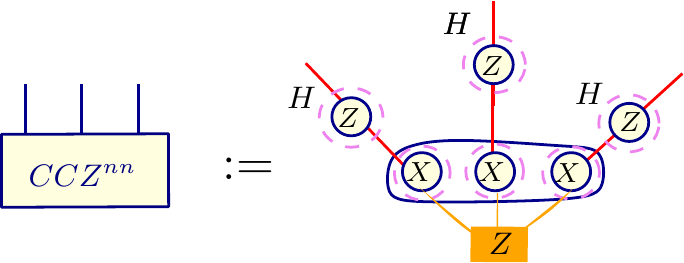}
\caption{A nearest-neighbor $CCZ$ gate is implemented up to $\{Z,CZ\}$ byproducts.
 See the Appendix B for details.}
\label{fig:gadgets}
\end{figure}

As a first step we measure almost all boxes in Pauli-$X$ basis, except the ones attached to the horizontal three vertices surrounded by a hyperedge $CCZ$. As a result we get graph  edges connecting different parts of the new state, see the transition from  Fig.~\ref{fig:resource} (a) to (b). Getting these graph edges is a crucial step, since it is partially responsible for \textit{(ii)~determinism of the protocol.} We use the resource in Fig.~\ref{fig:gadgets}  to  implement the  $CCZ^{nn}$ gate.  For $CCZ^{nn}$ gate implementation we have to secure independently three inputs and three outputs for $CCZ$ hyperedge in a hypergraph state to be used as a {\it logical} $CCZ$ gate. The box is measured in the Pauli-$Z$ basis  and just gets removed. The three vertices to which the box was attached to are  still surrounded by a hyperedge $CCZ$ up to Pauli-$Z$ byproducts.  These three qubits are connected to the rest of the state with the graph edges, and 
performing measurements as shown on Fig.~\ref{fig:gadgets} teleports the $CCZ$ gate 
to the output qubits (up to  $\{CZ, Z\}$ byproducts).  See  
Fig.~9 in the Appendix B for the explicit derivations.  

Now we need a \textit{SWAP} and a Hadamard ($H$) gate both contained in $\mathcal{C}_2$. Since some graph states  can directly implement  Clifford gates with  Pauli measurements only,  we first  get rid of all unnecessary $CCZ$ hyperedges from the resource state  by measuring all remaining boxes in 
Fig.~\ref{fig:resource}~(b) in Pauli-$X$ bases resulting to the state in 
Fig.~\ref{fig:Tohexagonal}~(b) (the full Pauli-$X$ measurement rule is needed  for the derivation) and looking at the bigger fragment, we get a graph as in Fig.~\ref{fig:Tohexagonal} (c).  The main idea here is to get rid of all the vertices which might be included in edges corresponding to byproduct $CZ$'s. Then, we make Pauli-$Z$ measurements (qubits to which an $H$ is applied, we measure 
in the Pauli-$X$ basis) on coloured vertices.  As a result, we  project to a hexagonal lattice deterministically. This construction is the final step also responsible for  \textit{(ii)~Determinism of the protocol.}  The hexagonal lattice can implement any Clifford gate in parallel  up to $\{X,Z\}$ byproducts using Pauli measurements only \cite{Nest2006}, and therefore,  we can implement a \textit{SWAP} gate. \textit{(iii)~Parallelization:} The \textit{SWAP} and $CCZ^{nn}$  gates together give a $CCZ$ gate over arbitrary distance, up to  $\{CZ,X,Z\}$ byproducts without adaptivity. 

\begin{figure}
\includegraphics[width=1.0\columnwidth]{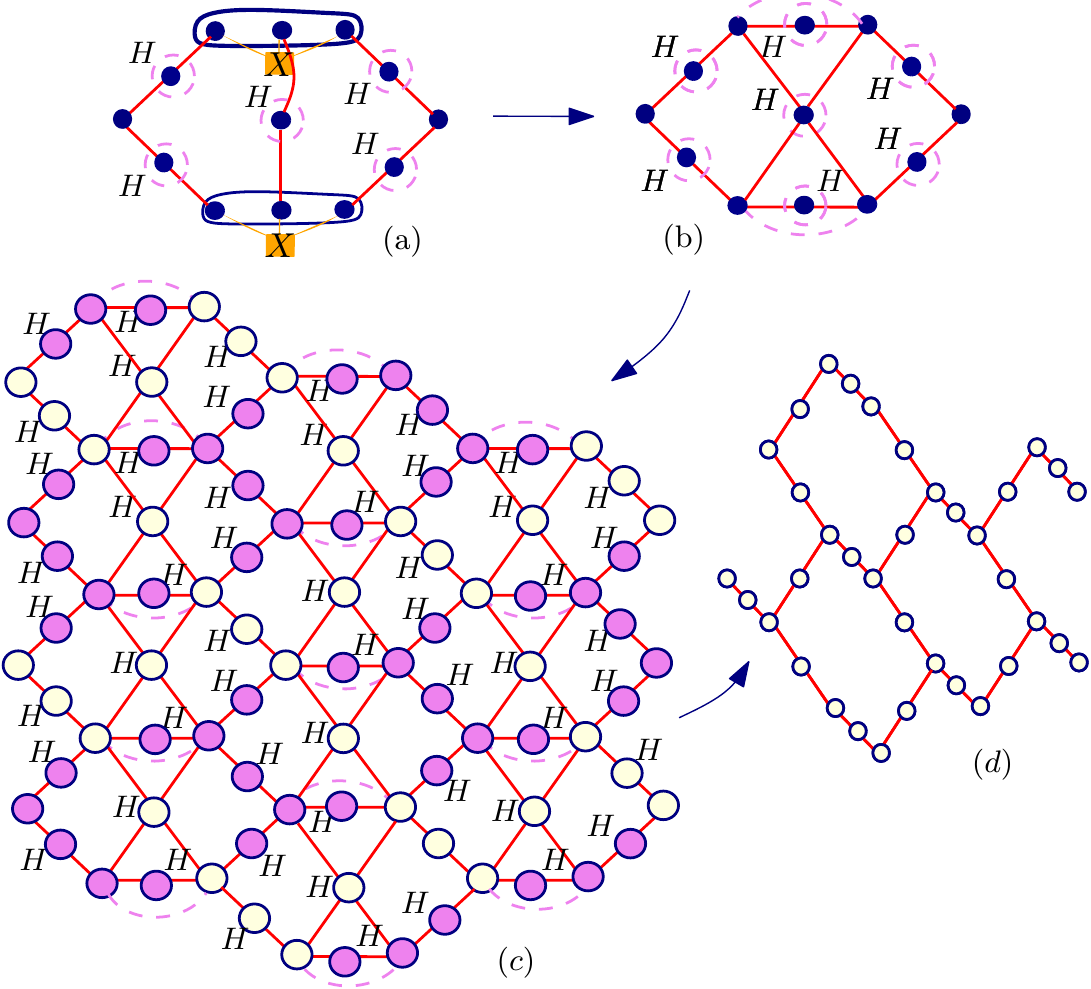}
\caption{A \textit{deterministic} graph state to implement \textit{SWAP}, H gates, and  correction steps.  (a)  Gets rid of hyperedges entirely  and projects on the graph state with Pauli-$Z$ and $CZ$ byproducts depicted 
by dashed lines in (b). The hexagonal lattice (d) is  obtained deterministically 
after measuring colored vertices in suitable Pauli bases on (c).}
\label{fig:Tohexagonal}
\end{figure}

\textit{(iv) Logical depth:} Finally,  after every $CCZ$ gate layer, we need to implement the Hadamard layer, 
which is straightforward \cite{Raussendorf2003}. However, since  $CZ$ byproducts cannot be fed-forward through Hadamard gates, we need to correct all $CZ$'s.  We can  again use the hexagonal lattice to perform 
the correction step, however, the $(k-1)$-th correction  step as enumerated in 
Fig.~\ref{fig:Circuit_Steps} itself introduces $\{X,Z\}$ byproducts which due to 
the commutation relation, $C_{abc}X_a=X_aC_{abc}C_{bc}$,  introduces new 
$CZ$ byproducts  before the $k$-th correction step. Consequently, the measurement 
results  during the $(k-1)$-th correction must be taken into account to correct 
all $CZ$ byproducts before the $k$-th  correction step. 
To sum up, we can parallelize all $CCZ$ gates, but we need to increment the  
circuit depth for each  Hadamard layer in order to correct all $CZ$ byproducts {\it adaptively}.  

\section{ Applications of parallelization}   
 We demonstrate that the parallelization in our MBQC protocol may find several practical applications, by considering an example of an $N$-times Controlled-Z ($C^NZ$) gate. Its implementation has been known either (i) in an $O(\log N)$ non-Clifford $T$ depth with $(8N-17)$ logical T-gates,  $(10N-22)$ Clifford gates  and $\left\lceil (N-3)/2\right\rceil $ ancillae \cite{Selinger2013, Maslov2016}, or (ii) in a constant depth (or constant rounds of adaptive measurements) albeit with $O(\exp N)$ $CZ$ gates in the cluster-state MBQC model and $O(\exp N)$ ancillae \cite{Raussendorf2003}. In our approach, a decomposition of the $C^NZ$ gate by $CCZ$ gates and a few number of Hadamard layers is desired.

\begin{theorem}
An $N$-times Controlled-Z ($C^NZ$) gate is feasible in an $O(\log N)$ logical depth of the Hadamard layers (or ``Hadamard'' depth), using a polynomial spatial overhead in $N$, namely  $(2N-6)$  logical Hadamard gates,  $(2N-5)$  $CCZ$ gates and  $(N-3)$ ancillae,  where $N=3\cdot 2^r$ for a positive integer $r$. 
\label{obs:CZ}
\end{theorem}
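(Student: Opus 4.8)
The plan is to reduce the claim to exhibiting a low-Hadamard-depth circuit over the logical gate set $\{CCZ, H\}$ and then to invoke Theorem~\ref{theorem1}. Since that theorem guarantees that all $CCZ$ and \textit{SWAP} gates are implemented in parallel without adaptation, the logical depth is governed solely by the number of global Hadamard layers; hence it suffices to build $C^NZ$ from $CCZ$ and $H$ gates using only $O(\log N)$ Hadamard layers, while tracking the total $CCZ$, Hadamard, and ancilla counts against the stated bounds.

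The construction I would use is a divide-and-conquer scheme tailored to $N = 3\cdot 2^r$. First I would partition the $N$ qubits into three groups of $2^r$ qubits each. Within each group I would compute the logical AND of its $2^r$ qubits into a single root ancilla using a balanced binary tree of two-input AND gates, where each AND is a Toffoli gate acting on a fresh $\ket{0}$ ancilla and each Toffoli is realized as one $CCZ$ conjugated by Hadamards on its ancilla target, $\mathrm{Tof}=H\,CCZ\,H$. A tree over $2^r$ leaves then uses $2^r-1$ ancillas and $2^r-1$ such $CCZ$ gates. I would then apply one central $CCZ$ to the three root ancillas: this deposits the phase $-1$ exactly when all three group-ANDs equal $1$, i.e.\ precisely when all $N$ input qubits are $\ket{1}$, which is the action of $C^NZ$. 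Finally I would uncompute the three trees to restore the ancillas to $\ket{0}$.

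Next I would verify the resource counts. The three trees supply $3(2^r-1)=N-3$ ancillas. The $CCZ$ count is $2\cdot 3(2^r-1)$ for compute-and-uncompute plus the single central gate, giving $2N-5$. For the Hadamards, I would argue that a temporary (relative-phase) AND implementation, in which the uncomputation reuses rather than re-applies each ancilla's basis change, brings the per-ancilla Hadamard cost down to two, for a total of $2(N-3)=2N-6$; pinning down this exact constant is part of the bookkeeping. For the depth, each level of a group tree is one parallel batch of $CCZ$s flanked by Hadamard layers, and the trailing Hadamards of one level and the leading Hadamards of the next act on disjoint ancillas and so share a layer; the tree height is $r=\log_2(N/3)$ and the uncomputation mirrors it, so the whole circuit uses $O(\log N)$ Hadamard layers. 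Because the $CCZ$ gates within a level act on disjoint qubit triples, they are parallelizable by Theorem~\ref{theorem1}, and the MBQC logical depth equals this layer count, $O(\log N)$.

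The main obstacle will be the adaptive byproduct bookkeeping demanded by Theorem~\ref{theorem1}(iv): the byproduct group contains $CZ$ operators that do not commute through the Hadamard layers, so all $CZ$ byproducts accumulated during the tree levels must be cleared before each Hadamard layer, with corrections depending on earlier measurement outcomes. I would need to show that these corrections can be applied in parallel within a single correction step per layer, so that they neither cascade nor inflate the depth beyond $O(\log N)$; establishing this, together with confirming that the $CCZ$ triples at each level are genuinely disjoint so that the promised parallelization actually applies, is the crux of the argument.
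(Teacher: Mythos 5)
Your construction is essentially the paper's: the three balanced AND-trees over $\ket{+}$-initialized ancillae with a single central $CCZ$ on the roots are exactly the recursive pairing circuit of Fig.~\ref{fig:C6Z} and Fig.~\ref{fig:C12Z}, and your counts ($N-3$ ancillae, $2N-5$ $CCZ$ gates, $2N-6$ Hadamards, $2r=O(\log N)$ Hadamard layers) match the paper's. The ``relative-phase'' bookkeeping you defer is precisely the paper's merging identity $C_{e_1}H_iC_{e_2}H_iC_{e_1}\ket{+}_i\ket{\psi}=\ket{+}_iC_{e_1\cup e_2\setminus\{i\}}\ket{\psi}$, which pins down the cost of two Hadamards per ancilla.
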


\begin{figure}
\centering
\includegraphics[width=1.0\columnwidth]{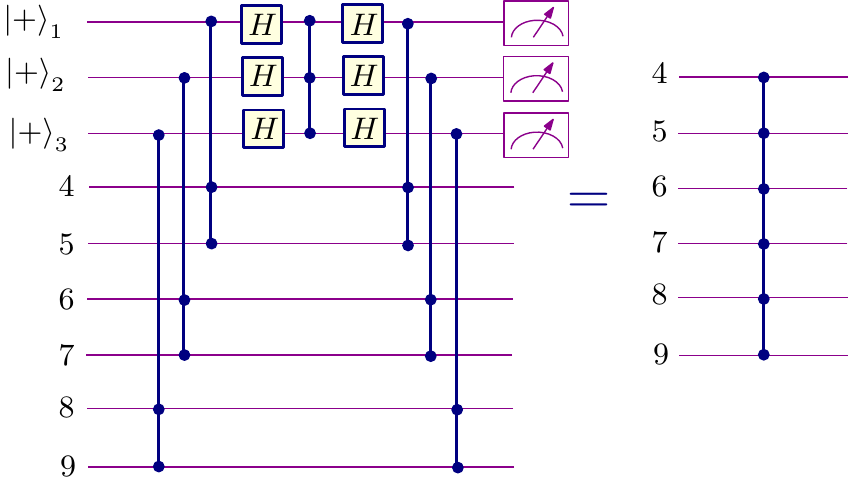}
\caption{The circuit identity to  create a $C^6Z$ gate using  $CCZ^{nn}$, \textit{SWAP},  and Hadamard gates.}
\label{fig:C6Z}
\end{figure}

The detailed derivation of the gate identity and the resource count is given in Fig.~\ref{fig:C6Z} and the Appendix C.  Note that the $T$ depth \cite{Selinger2013, Maslov2016} of \textit{(i)} and the Hadamard depth in Theorem~\ref{obs:CZ} are both logarithmic in this example. However, while the former counts  the depth of gates in $\mathcal{C}_3$ as a rough estimate in fault-tolerant quantum computation, the latter gives the depth according to the count in $\mathcal{C}_2$. Note that the $T$ depth in general is not the actual circuit depth of a unitary-gate sequence as it involves other non-commuting gates in $\mathcal{C}_2$. Our Hadamard depth, however, is indeed the actual logical depth of computation. Comparing \textit{(ii)} with our Theorem~\ref{obs:CZ}, the depth can be made constant in $N$ on a cluster state, if the number of physical qubits used in the MBQC protocol is allowed to be $2^N-1$  \cite{Raussendorf2003}.  Note that our construction in Theorem~\ref{obs:CZ} can be adapted on a cluster state by creating $CCZ$ in a constant depth and applying Theorem~\ref{obs:CZ}, so that the depth can be logarithmic in $N$ with a polynomial number of physical qubits (see Appendix C).  Therefore, Theorem~\ref{obs:CZ} demonstrates a general trade-off between space and time complexity required for quantum algorithms, from the perspective of MBQC.

\section{ Summary and outlook} 
We introduced a deterministic scheme of MBQC for the gate set of $CCZ$ and 
Hadamard gates, using a three-uniform hypergraph state and Pauli measurements. 
It enables us to parallelize massively all long-range $CCZ$ gates and the computational depth 
grows as we change computational bases. To take a broader perspective, one can 
define the Fourier hierarchy (FH) \cite{Shi2003, Shepherd2006, Shepherd2010} 
in terms of the number of the global change of the bases (namely, 
the globally parallel application of $H$ gates).  
Notably, classical polynomial-time computation, called the complexity class P, 
belongs to the 0th-level of FH. Since it is known that several 
important quantum algorithms, such as Kitaev's phase estimation, 
belong to the 2nd-level of FH (which requires only two layers of 
global $H$ gates) \cite{Shepherd2010}, it would be interesting 
to explore the implementations of low-level FH algorithms in our 
formulation. The recent major result by Bravyi et al. \cite{Bravyi2018} 
which proved quantum exponential advantage in the 2D Hidden Linear 
Function problem using a shallow circuit in the 2nd-level of FH 
is really encouraging towards this research direction 
(see e.g., \cite{Bremner2016, Bremner2017, Morimae2017b}). 

\section{Acknowledgment} 
We would like to thank D. Orsucci and J. Miller for scientific discussions, 
and D.-S. Wang for introducing to us his deterministic 
teleportation protocol using the five-qubit three-uniform hypergraph state 
in  Fig.~8 of the Appendix C. M.G. would like to thank 
A. Miyake,  O. G\"uhne, and G. Cordova for making her visit in Albuquerque 
possible, and the entire  CQuIC group at UNM for  hospitality.  M.G. and O.G 
acknowledge financial support from the DFG and the ERC (Consolidator Grant 
683107/TempoQ).  M.G. acknowledges funding from the Gesellschaft der Freunde 
und F\"orderer der Universit\"at Siegen.  A.M. is supported in part by 
National Science Foundation grants PHY-1521016 and PHY-1620651.

\appendix

\onecolumngrid
\section{\label{sec:level1}The Pauli-$X$ Measurement Rule for Hypergraph States }

In this section we derive the  Pauli-$X$ measurement rule for hypergraph states. 
We  give a sufficient criterion for the Pauli-$X$ measurements on a hypergraph 
state to project on a post-measurement state which is local unitary equivalent 
to some other hypergraph state. This criteria entirely captures the rule for graph states. 
For formulating the criterion we introduce a term which can be regarded as a generalization 
of the term  neighbourhood known in graph theory. We call it \textit{adjacency} of a vertex 
$a \in V$ and denote it by $\mathcal{A}(a) = \{e -\{a\}|e\in E \mbox{ with } a \in e\}$.

\begin{definition}
Given a hypergraph state $\ket{H}$ corresponding to a hypergraph $H=(V,E)$. If we write this hypergraph state  as follows,
\begin{equation}
\ket{H}=\frac{1}{\sqrt{2}}\ket{0}_a\ket{H_0}+\frac{1}{\sqrt{2}}\ket{1}_a\ket{H_1},
\end{equation}
we say that hypergraph state is expanded over a vertex $a\in V$.  By definition  $\ket{H_0}$ and $\ket{H_1}$ are also hypergraph states respectively corresponding to  hypergraphs $H_0$ and $H_1$ with hyperedges  $E_0=\{e\in E|a\notin e\}$ and  $E_1= E_0 \cup \mathcal{A}(a)$. If we choose a subset of vertices $V_x\subset V$ instead of a vertex $a$, we say that hypergraph state is expanded over a set of vertices $V_x\subset V$ and expansion is done iteratively for every vertex in $V_x$. 
\end{definition}
For example, if we want to expand the hypergraph state $\ket{H}$ over vertices $a$ and $b$, we first expand it over $a$ and then we expand  hypergraphs $\ket{H_0}$ and $\ket{H_1}$ separately over $b$  resulting in
\begin{equation}
\ket{H}=\frac{1}{\sqrt{2}}\ket{0}_a\bigg(\ket{0}_b\ket{H_{00}}+\ket{1}_b\ket{H_{01}}\bigg)+\frac{1}{\sqrt{2}}\ket{1}_a\bigg(\ket{0}_b\ket{H_{10}}+\ket{1}_b\ket{H_{11}}\bigg).
\end{equation}

If the vertex $a$ is measured in computational basis,  the post-measurement state is  a hypergraph state $\ket{H_0}$  for the  outcome $0$ or $\ket{H_1}$ for  the outcome $1$. However, if measured in Pauli-$X$ basis, then the post-measurement state is $\propto (\ket{H_0}\pm \ket{H_1})$ and is not always local unitary equivalent to a hypergraph state.  To check if for a given hypergraph state measuring a vertex $a$ or a set of  vertices $V_a$ in Pauli-$X$ basis gives a state local unitary equivalent to a hypergraph state, one can expand an original hypergraph state over a vertex  $a$ or a set $V_a$ and  check if all possible equally weighted superposition of expanded hypergraph states  gives some other hypergraph state or a state which is local unitary equivalent to a hypergraph state.
  
Let us consider particular cases of hypergraph states $\ket{H}$ which when expanded  over three vertices  $1,2,3$,  gives eight new  hypergraphs satisfying the following constraints  $H_{000}=H_{001}=H_{010}=H_{100}\equiv H_{\alpha}$ and $H_{111}=H_{110}=H_{101}=H_{011}\equiv H_{\beta}$. Then the expanded state can be written as follows: 
\begin{equation}\label{Eq:boxAlgebra}
\ket{H}=\frac{1}{\sqrt{8}}\bigg((\ket{000}+\ket{001}+\ket{010}+\ket{100})_{123}\otimes\ket{H_{\alpha}}+(\ket{111}+\ket{110}+\ket{101}+\ket{011})_{123}\otimes\ket{H_{\beta}}\bigg).
\end{equation}
If qubits $1,2,3$ are all measured in Pauli-$X$ bases, due to the symmetry of the first three qubits,  there are only four possible post measurement states presented in Table \ref{Table:postmeasurement}. We see from Table \ref{Table:postmeasurement} that outcome $\bra{+--}$ never occurs and outcomes $\bra{++-}$ and  $\bra{---}$ are equivalent to each other up to the global sign. Therefore, if we measure the first three qubits of the hypergraph state $\ket{H}$ as presented in Eq.~(\ref{Eq:boxAlgebra}),  there are only two possible post-measurement states and they correspond to the equally weighted superposition of two hypergraph states $\ket{H_\alpha}\pm \ket{H_\beta}$. These three qubits and their adjacencies are of our interest and in the main text they are denoted by a box. Below we  consider three examples where we measure these three qubits but we vary the hypergraphs $H_\alpha$ and $H_\beta$.

\begin{center}

\begin{table}[t]
\begin{tabular}{|c|c|c|}
\hline 
\# & Outcome & Post-measurement state\tabularnewline
\hline 
\hline 
1. & $\bra{+++}_{123}$ & $\propto(\ket{H_{\alpha}}+\ket{H_{\beta}})$\tabularnewline
\hline 
2. & $\bra{++-}_{123}$ & $\propto(\ket{H_{\alpha}}-\ket{H_{\beta}})$\tabularnewline
\hline 
3. & $\bra{+--}_{\ensuremath{123}}$ & 0\tabularnewline
\hline 
4. & $\bra{---}_{\ensuremath{123}}$ & $\propto-(\ket{H_{\alpha}}-\ket{H_{\beta}})$\tabularnewline
\hline 
\end{tabular}
\caption{All possible post-measurement states for Pauli-$X$ measurements on qubits $1,2,3$ in 
Eq.~(\ref{Eq:boxAlgebra}). Case 2 and 4 are equivalent up to a global sign. }
\label{Table:postmeasurement}
\end{table}
\end{center}

The equally weighted superposition of two hypergraph states is not always a hypergraph state again unless we choose  two hypergraphs $H_\alpha$ and $H_\beta$ specifically.  Here we give a sufficient criterion for  equally weighted superpositions of two hypergraph states being a hypergraph state up to local unitary operations and derive the graphical rule for such cases: 
 
\begin{theorem}\label{th:X_measurement_rule}
Let $H_\alpha=(V,E)$ and  $H_\beta=(V, E\cup \{a\}\cup\tilde{E})$, where $\tilde{E}$ are  hyperedges not containing a vertex $a\in V$. Then the equally weighted superpositions of two hypergraph states $\ket{H_\alpha}$ and $\ket{H_\beta}$ up to the Hadamard  gate acting on the vertex $a$, $H_a$ are still  hypergraph states  denoted by $\ket{H_{+}}$ and $\ket{H_{-}}$: 
\begin{equation}\label{Eq:AppendixoutcomePlus}
H_a\ket{H_{+}}\equiv H_a(\ket{H_\alpha}+\ket{H_\beta})\propto\prod_{e'\in E'}C_{e'} \prod_{e_a\in \mathcal{A}^\alpha(a)}\prod_{\tilde{e}\in \tilde{E}}C_{e_a\cup \tilde{e}} C_{\tilde{e}\cup a} \ket{+}^{\otimes N},
\end{equation}
\begin{equation}\label{Eq:AppendixoutcomeMinus}
H_a\ket{H_{-}}\equiv H_a(\ket{H_\alpha}-\ket{H_\beta})\propto C_a\prod_{e'\in E'}C_{e'} \prod_{e_a\in \mathcal{A}^\alpha(a)}C_{e_a}\prod_{\tilde{e}\in \tilde{E}}C_{e_a\cup \tilde{e}} C_{\tilde{e}\cup a} \ket{+}^{\otimes N}.
\end{equation}
Here $\mathcal{A}^\alpha(a)$ is the adjacency of the  vertex $a$ in hypergraph  $H_\alpha$   and $E'=\{e'|a\notin e' ,\  e'\in E\}$  and  $C_a=Z_a$.
\end{theorem}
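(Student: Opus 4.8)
The plan is to work directly in the computational basis, where a hypergraph state is nothing but an equal-magnitude state $\frac{1}{\sqrt{2^N}}\sum_x (-1)^{f(x)}\ket{x}$ whose phase $f:\{0,1\}^N\to\{0,1\}$ is a multilinear polynomial over $\mathbb{F}_2$: since $C_e\ket{x}=(-1)^{\prod_{j\in e}x_j}\ket{x}$, each monomial $\prod_{j\in e}x_j$ of $f$ corresponds to exactly one gate $C_e$, and conversely any such $f$ (up to an irrelevant constant term) yields a hypergraph state. Thus $\ket{H_\alpha}$ and $\ket{H_\beta}$ carry phases $f_\alpha$ and $f_\beta=f_\alpha+\delta$ with $\delta(x)=x_a+\tilde g(\bar x)$, where $\tilde g(\bar x)=\sum_{\tilde e\in\tilde E}\prod_{j\in\tilde e}x_j$ depends only on $\bar x:=x_{V\setminus\{a\}}$ because no $\tilde e$ contains $a$; the linear term $x_a$ is precisely the contribution of the singleton edge $\{a\}$, i.e.\ of $C_a=Z_a$.

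First I would form the superpositions. Since $\ket{H_\alpha}\pm\ket{H_\beta}\propto\sum_x(-1)^{f_\alpha(x)}\bigl(1\pm(-1)^{\delta(x)}\bigr)\ket{x}$, the state $\ket{H_+}$ is supported exactly on $\{\delta=0\}$ and $\ket{H_-}$ on $\{\delta=1\}$. Because $\delta$ is linear in $x_a$, each branch pins $x_a$ to a function of the remaining bits: $x_a=\tilde g(\bar x)$ for $\ket{H_+}$ and $x_a=1+\tilde g(\bar x)$ for $\ket{H_-}$. Hence each $\ket{H_\pm}$ has support of size $2^{N-1}$ and is \emph{not} itself a hypergraph state; this is exactly the defect that $H_a$ must repair. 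This step is where both hypotheses enter decisively: that no $\tilde e$ contains $a$ makes $\tilde g$ independent of $x_a$, and the presence of the singleton edge $\{a\}$ makes $\delta$ solvable for $x_a$, so that $H_a$ is the correct single-qubit correction.

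Next I would apply $H_a$, using $H_a\ket{b}_a=\frac{1}{\sqrt2}\sum_c(-1)^{bc}\ket{c}_a$, which sums over the now-free variable $c$ (the new value at $a$) and restores full support $2^N$ with amplitudes $\pm2^{-N/2}$. For $\ket{H_+}$ this produces the phase $F_+(c,\bar x)=f_\alpha(\bar x,\,x_a=\tilde g(\bar x))+c\,\tilde g(\bar x)$, and for $\ket{H_-}$ the phase $F_-(c,\bar x)=f_\alpha(\bar x,\,x_a=1+\tilde g(\bar x))+c\,(1+\tilde g(\bar x))$. As these are again multilinear polynomials over $\mathbb{F}_2$ and the resulting state has full equal-magnitude support, $H_a\ket{H_\pm}$ is a genuine hypergraph state, which already proves the claim up to the identification of its edges.

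The remaining and most delicate step is to expand $F_\pm$ into algebraic normal form and match them to Eqs.~(\ref{Eq:AppendixoutcomePlus}) and (\ref{Eq:AppendixoutcomeMinus}). Writing $f_\alpha(\bar x,x_a)=g_0(\bar x)+x_a\,g_1(\bar x)$ with $g_0=\sum_{e'\in E'}\prod_{j\in e'}x_j$ and $g_1=\sum_{e_a\in\mathcal{A}^\alpha(a)}\prod_{j\in e_a}x_j$, substituting $x_a\mapsto\tilde g$ and using $x_j^2=x_j$ gives $F_+=g_0+\tilde g\,g_1+c\,\tilde g$, whose monomials are precisely $E'$, the combined edges $e_a\cup\tilde e$, and the edges $\tilde e\cup\{a\}$; for $F_-$ the substitution $x_a\mapsto1+\tilde g$ adds the extra terms $g_1+c$, i.e.\ the edges $\mathcal{A}^\alpha(a)$ and the singleton $\{a\}$ (the prefactor $C_a=Z_a$). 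I expect the main obstacle to be purely bookkeeping: keeping the $\mathbb{F}_2$ products consistent under $x_j^2=x_j$, and reconciling the operator-product notation of Eqs.~(\ref{Eq:AppendixoutcomePlus})--(\ref{Eq:AppendixoutcomeMinus}) with the resulting edge multiset, where repeated factors $C_e$ cancel in pairs since each $C_e$ is an involution. An equivalent operator-algebra route---expanding over $a$ and evaluating diagonal sums such as $\mathbbm{1}+Q+P-QP=2(-1)^{\tilde g\,g_1}$ on $\ket{+}^{\otimes(N-1)}$, with $P=\prod_{e_a\in\mathcal{A}^\alpha(a)}C_{e_a}$ and $Q=\prod_{\tilde e\in\tilde E}C_{\tilde e}$---yields the same edge list and serves as a cross-check.
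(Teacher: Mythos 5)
Your proof is correct, but it follows a genuinely different route from the one in the paper. The paper works at the operator level: it writes $\ket{H_\beta}=Z_a\prod_{\tilde e\in\tilde E}C_{\tilde e}\ket{H_\alpha}$, pulls the common factor $\prod_{e\in E}C_e$ out of the superposition, inserts $H_aH_a=\mathbbm{1}$ so that the edges through $a$ conjugate into generalized $CNOT$ gates, $H_a\prod_{e''\in E''}C_{e''}H_a=\prod_{e_a\in\mathcal{A}^\alpha(a)}CNOT_{e_a,a}$, recognizes the remaining bracket $\ket{0}\pm\ket{1}\prod_{\tilde e}C_{\tilde e}$ as the expanded form of a hypergraph state with edges $\tilde e\cup\{a\}$ (plus $C_a$ in the minus case), and then invokes the generalized-$CNOT$ edge-update rule of Ref.~\cite{Gach2017} (adding the pairwise unions $e_a\cup\tilde e$, and $e_a\cup\emptyset=e_a$ for the minus branch) to read off the final edge sets. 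You instead work entirely in the computational basis with $\mathbb{F}_2$ phase polynomials: the superposition restricts the support to the affine subspace $x_a=\tilde g(\bar x)$ or $x_a=1+\tilde g(\bar x)$, and $H_a$ Fourier-transforms that pinned bit back into a free variable at the cost of the phase $c\,\tilde g$ (resp.\ $c(1+\tilde g)$), after which substituting for $x_a$ in $f_\alpha=g_0+x_ag_1$ and reading off the ANF monomials $g_0$, $\tilde g\,g_1$, $c\,\tilde g$ (plus $g_1$ and $c$ in the minus case) reproduces Eqs.~(\ref{Eq:AppendixoutcomePlus}) and (\ref{Eq:AppendixoutcomeMinus}) exactly; I checked the bookkeeping and it matches. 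Your version buys self-containedness (no appeal to the external $CNOT$ rule), an explicit explanation of \emph{why} $H_a$ is the right correction (the support of $\ket{H_\pm}$ is only half the cube, with $x_a$ a deterministic Boolean function of the rest, which is precisely what a single Hadamard can repair), and automatic handling of repeated edges and of the degenerate case $\emptyset\in\mathcal{A}^\alpha(a)$ through mod-2 cancellation; the paper's version buys compatibility with the graphical local-complementation formalism developed immediately afterwards in the appendix, where the same edge products reappear as three successive generalized local complementations.
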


\begin{proof}
Let us assume that $a=1$. Then we get: 
\begingroup
\allowdisplaybreaks
\begin{align}
H_1\ket{H_{+}}&=H_1(\ket{H_\alpha}+\ket{H_\beta})\\
&=H_1(\ket{H_\alpha}+Z_1 \prod_{\tilde{e}\in \tilde{E}}C_{\tilde{e}}\ket{H_\alpha})\\
& =H_1(\prod_{e\in E} C_e \bigg( \ket{+}^{\otimes N}+Z_1 \prod_{\tilde{e}\in \tilde{E}}C_{\tilde{e}} \ket{+}^{\otimes N}\bigg)\\
& =H_1\prod_{e\in E} C_eH_1 H_1 \bigg(\bigg[\ket{+}+\ket{-} \prod_{\tilde{e}\in \tilde{E}}C_{\tilde{e}}\bigg] \ket{+}^{\otimes {N-1}}\bigg)\label{Eq:Hplusline4}\\
& =H_1\prod_{e'\in E'} C_{e' }\prod_{e''\in E''} C_{e''} H_1 H_1 \bigg(\bigg[\ket{+}+\ket{-} \prod_{\tilde{e}\in \tilde{E}}C_{\tilde{e}}\bigg]\ket{+}^{\otimes {N-1}}\bigg)\label{Eq:Hplusline5}\\
\phantom{H_1\ket{H_{+}}}& =\prod_{e'\in E'} C_{e' }H_1\prod_{e''\in E''} C_{e''} H_1 H_1 \bigg(\bigg[\ket{+}+\ket{-} \prod_{\tilde{e}\in \tilde{E}}C_{\tilde{e}}\bigg] \ket{+}^{\otimes {N-1}}\bigg)\label{Eq:Hplusline6}\\
\phantom{H_1\ket{H_{+}}}
&=\prod_{e'\in E'}C_{e'} \prod_{e_1\in \mathcal{A}^\alpha(1)}CNOT_{e_1,1} \bigg(\bigg[\ket{0}+\ket{1} \prod_{\tilde{e}\in \tilde{E}}C_{\tilde{e}}\bigg]\ket{+}^{\otimes {N-1}}\bigg) \label{Eq:Hplusline7}\\
\phantom{H_1\ket{H_{+}}}&\propto\prod_{e'\in E'}C_{e'} \prod_{e_1\in \mathcal{A}^\alpha(1)}CNOT_{e_1,1} \prod_{\tilde{e}\in \tilde{E}}C_{\tilde{e}\cup 1} \ket{+}^{\otimes N} \label{Eq:Hplusline8}\\
\phantom{H_1\ket{H_{+}}}&=\prod_{e'\in E'}C_{e'} \prod_{e_1\in \mathcal{A}^\alpha(1)}\prod_{\tilde{e}\in \tilde{E}}C_{e_1\cup \tilde{e}} C_{\tilde{e}\cup 1} \ket{+}^{\otimes N} \label{Eq:Hplusline9}
\end{align}
\endgroup

In Eq.~(\ref{Eq:Hplusline4}) we  decompose  a set of hyperedges $E$ into two parts: $E'$, hyperedges which do not contain the vertex 1 and $E''$ hyperedges which contain the vertex 1.  In Eq.~(\ref{Eq:Hplusline5}) the set of hyperedges $\prod_{e'\in E'}C_{e'}$ commute with $H_1$ and going to Eq.~(\ref{Eq:Hplusline6}),  $H_1\prod_{e''\in E''}C_{e''}H_1=\prod_{e_1\in\mathcal{A}^\alpha(1)}CNOT_{{e_1},1}$, since Hadamard gate $H_1$ changes $Z_1$ to $X_1$ and, therefore, generalized Controlled-$Z$ gates become generalized $CNOT$ gates. 

 In  Eq.~(\ref{Eq:Hplusline6}), $H_1$ is applied to $\ket{\pm}$  and in  Eq.~(\ref{Eq:Hplusline7}) a new hypergraph state is obtained, which is written in an expanded form over vertex $1$. If we write this hypergraph state we get Eq.~(\ref{Eq:Hplusline8}):
 
\begin{equation}\label{Eq:Hplusline7plusHGstate}
\bigg(\bigg[\ket{0}+\ket{1} \prod_{\tilde{e}\in \tilde{E}}C_{\tilde{e}}\bigg] \ket{+}^{\otimes {N-1}}\bigg)\propto \prod_{\tilde{e}\in \tilde{E}}C_{\tilde{e}\cup 1} \ket{+}^{\otimes N}.
\end{equation}
Then generalized $CNOT$ gates are applied to a new hypergraph state in Eq.~(\ref{Eq:Hplusline8}). The action of generalized $CNOT$ gate was described in Ref. \cite{Gach2017} as follows:  Applying the generalized $CNOT_{Ct}$ gate to a hypergraph state,  where a set of control qubits $C$ controls the target qubit $t$,   introduces or deletes the set of edges  $E_t=\{e_t\cup C|e_t\in\mathcal{A}(t)\}$.  

In Eq.~(\ref{Eq:Hplusline8})  the generalized $CNOT$ gate is applied to  the hypergraph state which corresponds to the hypergraph $(V, \{\tilde{e}\cup \{1\}|\tilde{e}\in \tilde{E} \})$. The target qubit in the generalized $CNOT$ gate  is  the vertex $1$ and its adjacency is, therefore,   given by edge-set  $\tilde{E}$. The control qubits are presented by the edge-set $\mathcal{A}^\alpha(1)$, which correspond to the adjacency of the vertex $1$ in the hypergraph $H_\alpha$. The action of  generalized $CNOT$ gate takes the pairwise union 
of hyperedges in $\mathcal{A}^\alpha(1)$ and $\tilde{E}$ and adds or deletes new hyperedges: 
\begin{equation}
\prod_{e_1\in \mathcal{A}^\alpha(1)}\prod_{\tilde{e}\in \tilde{E}}C_{e_1\cup \tilde{e}}.
\end{equation}
Inserting these hyperedges in Eq.~(\ref{Eq:Hplusline9}), we get the final hypergraph states: 
\begin{equation}
H_1(\ket{H_{+}})\propto \prod_{e'\in E'}C_{e'} \prod_{e_1\in \mathcal{A}^\alpha(1)}\prod_{\tilde{e}\in \tilde{E}}C_{e_1\cup \tilde{e}} C_{\tilde{e}\cup 1} \ket{+}^{\otimes N}.
\end{equation}

In case of the minus superposition $H_1\ket{H_{-}}$,  the derivations are very similar to $H_1\ket{H_{+}}$  up to Eq.~(\ref{Eq:Hplusline7}): In particular, due to the minus sign   in the superposition, we get a different hypergraph state from the one in Eq.~(\ref{Eq:Hplusline7plusHGstate}):

\begin{equation}\label{Eq:Hplusline7minusHGstateexpanded}
H_1(\ket{+}-\ket{-} \prod_{\tilde{e}\in\tilde{E}}C_{\tilde{e}}) \ket{+}^{\otimes {N-1}}=(\ket{0}-\ket{1} \prod_{\tilde{e}\in \tilde{E}}C_{\tilde{e}}) \ket{+}^{\otimes {N-1}}=C_1\prod_{\tilde{e}\in \tilde{E}}C_{\tilde{e}\cup 1} \ket{+}^{\otimes N}
\end{equation}

 Now we apply generalized $CNOT$ gate to the hypergraph state in Eq.~(\ref{Eq:Hplusline7minusHGstateexpanded}) : 
 
\begin{equation} \label{Eq:Hplusline7minusHGstate}
 \prod_{e_1\in \mathcal{A}^\alpha(1)}CNOT_{e_1,1}C_1\prod_{\tilde{e}\in \tilde{E}}C_{\tilde{e}\cup 1} \ket{+}^{\otimes N}.
\end{equation}

The hypergraph state in Eq.~(\ref{Eq:Hplusline7minusHGstateexpanded}) has the additional edge $C_1$  and this means that the adjacency of the vertex $1$ in Eq.~(\ref{Eq:Hplusline7minusHGstate}) is given by the edge-set $\{\tilde{E}\cup \{\emptyset\}\} $.   The action of  generalized $CNOT$ gate takes the pairwise union 
of hyperedges in $\mathcal{A}^\alpha(1)$ and  $\{\tilde{E}\cup \{\emptyset\}\} $ and introduces new hyperedges of the form in the hypergraph: 

\begin{equation}
 \prod_{e_1\in \mathcal{A}^\alpha(1)}C_{e_1}\prod_{\tilde{e}\in \tilde{E}}C_{e_1\cup \tilde{e}}
\end{equation}

Inserting these hyperedges in the original derivations, gives us the final hypergraph state:
\begin{equation}
H_1\ket{H_{-}}\propto C_1\prod_{e'\in E'}C_{e'} \prod_{e_1\in \mathcal{A}^\alpha(1)}C_{e_1}\prod_{\tilde{e}\in \tilde{E}}C_{e_1\cup \tilde{e}} C_{\tilde{e}\cup 1} \ket{+}^{\otimes N}.
\end{equation}
\end{proof}

Given any graph state $\ket{G}$ corresponding to a connected graph $G=(V,E)$, if we expand it over any of its vertices $a\in V$, 
\begin{equation}
\ket{G}=\frac{1}{\sqrt{2}}\bigg(\ket{0}_a\ket{G_0}+\ket{1}_a\ket{G_1}\bigg),
\end{equation}  
then graphs corresponding to $\ket{G_0}$ and $\ket{G_1}$ satisfy the condition of Theorem \ref{th:X_measurement_rule} since $\ket{G_1}=\prod_{i\in \mathcal{N}(a)} Z_i \ket{G_0}$, where $\mathcal{N}(a)$ is the neighbourhood of the vertex $a$.  So, if the vertex $a$ is measured in Pauli-$X$ basis, the post measurement states are the equally weighted superpositions of $\ket{G_0}$ and $\ket{G_1}$ and therefore,   Theorem \ref{th:X_measurement_rule} gives the rules for deriving post-measurement states for both outcomes of measurement Pauli-$X$ basis.  The rules for Pauli-$X$ measurement for graph states was previously derived in Ref. \cite{Hein2006} using a different approach. 

\subsection{Pauli-X measurement rule using generalised local complementation on hypergraph states}

Here we  briefly review the post-measurement rules obtained for graph states using the graphical action called, \textit{local complementation} and then we generalize this result to hypergraph states. This gives a graphical rule for Pauli-$X$ measurements on hypergraph states.

Given a graph states $\ket{G}$, corresponding to a graph $G=(V,E)$, there are well defined graphical rules for obtaining post-measurement states after Pauli-$X$ measurement \cite{Hein2006} up to local corrections. The post measurement state after measuring a vertex $a$ in Pauli-$X$ basis is:
\begin{equation}
U^a_{x,\pm}\ket{\tau_{b_0}(\tau_a\circ \tau_{b_0}(G))-a},
\end{equation}
for any $b_0\in \mathcal{N}(a)$, where the map $\tau$ is local complementation and $U^a_{x,\pm}$ corresponds to a local unitary operation depending on the measurement outcome. The action  of local complementation on some vertex $a$ is defined as follows: If there were edges between pairs of vertices  in $\mathcal{N}(a)$,  erase the edges and if there is no edges between some of the vertices in  $\mathcal{N}(a)$, the edge is added between these pairs of vertices. Pauli-$X$ measurement on graph states can be described as the three consecutive applications of local complementations \cite{Hein2006}. 

Now we extend the rule to hypergraph states. We keep in mind the sufficient rule for  Pauli-$X$ measurements on hypergraph states to give a hypergraph state. Instead of writing down all the measured qubits, we can write the  following state, which would give exactly the same post-measurement states when the vertex $B$ is measured in Pauli-$X$ bases, as the original hypergraph vertices being measured in Pauli-$X$ bases  (here we are disregarding the probabilities for the post measurement-states):
\begin{equation}\label{Eq:BoxReplaced}
\ket{H_B}=\frac{1}{\sqrt{2}}\bigg(\ket{0}_B\ket{H_\alpha}+\ket{1}_B\ket{H_\beta}\bigg).
\end{equation} 
We have replaced the three qubits (a box) here with only one additional ancilla qubit $B$,  which from the structure of the hypergraphs $H_\alpha$ and $H_\beta$ evidently contains at least one graph edge connecting $B$ to the rest of the hypergraph.

We are now ready to formulate the result:
\begin{theorem}
 Given a hypergraph state $\ket{H_B}$  corresponding to a hypergraph $H_B=(V_B, E_B)$ as in Eq.(\ref{Eq:BoxReplaced}), then the post-measurement states of Pauli-$X$ basis measurement on the vertex $B$ is derived by three actions of generalized local complementation rule as follows: 
\begin{equation}\label{eq:localcomplementationrule}
U_{x,\pm}\ket{\tilde\tau_{a}(\tilde\tau_B\circ \tilde\tau_{a}(\ket{H_B}))-\{B\}},
\end{equation}
where $a$ and $B$ are contained in the same  graph edge,  $\{a,B\}\in E_B$  and
\begin{equation}
U_{x,+}=\mathbbm{1} \quad \quad \mbox{    and    } \quad \quad  U_{x,-}=C_{a}\prod_{e_i\in \mathcal{A}^{H_\alpha}(a)}C_{e_i}.
\end{equation}
Here $\mathcal{A}^{H_\alpha}(a)$ means that the adjacency of qubit $a$ must be taken from the hypergraph $H_\alpha$.
\end{theorem}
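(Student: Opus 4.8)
The plan is to reduce this statement to the already-established algebraic rule of Theorem~\ref{th:X_measurement_rule}. Since that theorem produces the explicit post-measurement hypergraph states $H_a\ket{H_+}$ and $H_a\ket{H_-}$ in Eqs.~(\ref{Eq:AppendixoutcomePlus}) and (\ref{Eq:AppendixoutcomeMinus}), my task is purely combinatorial: to show that the graphical recipe $\tilde\tau_a(\tilde\tau_B\circ\tilde\tau_a(\cdot))-\{B\}$ produces exactly the hyperedge set appearing there, and that the residual local operator agrees with $U_{x,\pm}$. First I would fix a precise definition of the generalized local complementation $\tilde\tau_v$: following the generalized $CNOT$ action of Ref.~\cite{Gach2017}, $\tilde\tau_v$ is the graphical map that toggles (adds mod~$2$) every hyperedge obtained as a pairwise union of two distinct members of the adjacency $\mathcal{A}(v)$, and I would verify it is realized by a local unitary on $\ket{H_B}$, so that it keeps us within the hypergraph-state space up to local unitaries and reduces to the standard $\tau_v$ on the graph-state sector.

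The core of the argument is a three-stage bookkeeping of the hyperedge set, organized around the single graph edge $\{a,B\}\in E_B$ guaranteed by the structure of $\ket{H_B}$ in Eq.~(\ref{Eq:BoxReplaced}). Writing $H_\beta=(V,E\cup\{a\}\cup\tilde E)$ with $a\notin\tilde e$ for all $\tilde e\in\tilde E$, the vertex $B$ carries precisely the singleton edge to $a$ together with the hyperedges $\tilde E$ that encode the difference between $H_\alpha$ and $H_\beta$. I would apply $\tilde\tau_a$ first, recording how pairwise unions over $\mathcal{A}^\alpha(a)$ act on the $B$-block; then $\tilde\tau_B$, whose adjacency now contains $a$ and the family $\tilde E$, which is exactly the stage that manufactures the union hyperedges $C_{e_1\cup\tilde e}$; then $\tilde\tau_a$ again; and finally delete $B$ by projecting it out along the graph edge $\{a,B\}$. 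Tracking the symmetric difference at each step, the surviving hyperedges should collapse to $\prod_{e'\in E'}C_{e'}\prod_{e_1\in\mathcal{A}^\alpha(a)}\prod_{\tilde e\in\tilde E}C_{e_1\cup\tilde e}C_{\tilde e\cup a}$, matching Eq.~(\ref{Eq:AppendixoutcomePlus}) for the $+$ outcome.

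To finish, I would read off the local correction for the $-$ outcome. Comparing with Eq.~(\ref{Eq:AppendixoutcomeMinus}), the only discrepancy between the two measurement branches is the extra factor $C_a\prod_{e_1\in\mathcal{A}^\alpha(a)}C_{e_1}$, which is exactly the stated $U_{x,-}$, while the $+$ branch needs no correction, giving $U_{x,+}=\mathbbm{1}$; this is where the outcome-dependent sign in $\ket{H_\alpha}\pm\ket{H_\beta}$ enters. As a consistency check I would specialize to graph states, where $\tilde\tau_v=\tau_v$, the family $\tilde E$ reduces to singletons, no genuine hyperedges are generated, and the formula collapses to the three-local-complementation rule of Hein \emph{et al.}~\cite{Hein2006}, confirming that the hypergraph rule strictly extends the graph case.

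The main obstacle I anticipate is that generalized local complementation is \emph{not} closed on the hypergraph-state space in general, and that composing three of them can, a priori, create higher-cardinality hyperedges with no counterpart in Eqs.~(\ref{Eq:AppendixoutcomePlus})--(\ref{Eq:AppendixoutcomeMinus}). The crux is therefore to show that all such spurious unions either cancel pairwise under the $\mathbb{Z}_2$ toggling structure or are absorbed into the allowed families of edges, and that the intermediate states after one and two complementations remain local-unitary equivalent to honest hypergraph states. I expect the special form $H_\beta=(V,E\cup\{a\}\cup\tilde E)$ with $a\notin\tilde e$ --- together with the presence of the graph edge $\{a,B\}$ --- to be exactly the hypothesis that forces this closure, so verifying closure carefully, rather than the final edge count, is where the real work lies.
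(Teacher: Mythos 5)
Your proposal matches the paper's proof essentially step for step: the paper likewise defines the generalized local complementation $\tilde\tau_v$ as toggling all pairwise unions of elements of the adjacency of $v$, explicitly tracks the $\mathbb{Z}_2$ cancellations of hyperedges through $\tilde\tau_a$, $\tilde\tau_B$, the deletion of $B$, and the final $\tilde\tau_a$, and then verifies the result against Eqs.~(\ref{Eq:AppendixoutcomePlus}) and (\ref{Eq:AppendixoutcomeMinus}) of Theorem~\ref{th:X_measurement_rule} for both outcomes to read off $U_{x,\pm}$. Your anticipated obstacle about closure is not actually an issue, since $\tilde\tau_v$ is by construction a product of generalized $CZ$ gates and therefore always yields a hypergraph state, but you correctly identified that the pairwise cancellation of the spurious union hyperedges is where the real computation lies.
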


\begin{proof}
We first introduce the action of a generalized local complementation on vertex $B$  of an arbitrary hypergraph state $\ket{H}$:
\begin{equation}
\tilde\tau_{B}(\ket{H})=\prod_{e_i\in \mathcal{A}^{H}(B)}\prod_{e_j\in \mathcal{A}^{H}(B), i<j}C_{e_i\cup e_j }\ket{H}.
\end{equation}
Therefore,  a  pairwise union of  $\forall e_i, e_j\in \mathcal{A}^{H}(B)$, where $i< j$,  is added to the hyperedges of a hypergraph $H$ as a result of an action of a generalized  local complementation. For the physical maps and a derivation of the rule see Ref.~\cite{Gach2017}.

Now we use this rule to prove the theorem. From Theorem \ref{th:X_measurement_rule} we know that the hypergraphs have the following structure:  $H_\alpha=(V,E)$ and  $H_\beta=(V, E\cup \{a\}\cup\tilde{E})$, where $\tilde{E}$ are  hyperedges not containing a vertex $a\in V$. Therefore, the hypergraph $H_B$ indeed contains an edge $\{a,B\}$ and there is no other hyperedge in $H_B$ containing both $a$ and $B$ together.

Let us then consider the action of the first generalized local complementation $\tilde{\tau}(a)$. Note again that $a$ is only contained in the hyperedges $E \cup \{a,B\}$ :
\begin{equation}
\tilde{\tau}(a) \ket{H_B}=\tilde{\tau}(a) C_{aB}\prod_{\tilde e_i \in \tilde{E}} C_{\tilde{e_i}\cup B}\ket{+}_B\ket{H_\alpha}=    C_{aB}\prod_{e_i\in \mathcal{A}^{H_\alpha}(a)} C_{e_i\cup B } \prod_{e_j\in \mathcal{A}^{H_\alpha}(a), i<j}C_{e_i\cup e_j }  \prod_{\tilde e_i \in \tilde{E}} C_{\tilde{e_i}\cup B}\ket{+}_B\ket{H_\alpha}.
\end{equation}

Now we consider the second action, when $\tilde{\tau}(B)$ is applied to the new hypergraph. Note that the vertex $B$ is now contained in three types of hyperedges:  the every hyperedge in $\mathcal{A}^{H_\alpha}$   $\cup$ in every  hyperedge  in $\tilde{E}$ $\cup$ finally in $\{a,B\}$. We have to take a pairwise union between the types of the hyperedges and also the pairwise union within each type too:

\begin{equation}
\tilde\tau(B)\circ\tilde{\tau}(a) \ket{H_B}=C_{aB} \prod_{e_i\in \mathcal{A}^{H_\alpha}(a)} C_{e_i\cup B }   \prod_{\tilde e_i \in \tilde{E}} C_{\tilde{e}_i\cup B}C_{\tilde{e}_i\cup a}C_{\tilde{e}_i\cup e_i}\prod_{{\tilde e}_j \in \tilde{E}, i<j}  C_{\tilde{e}_i\cup \tilde{e}_j}\prod_{e'\in E'}C_{e'}\ket{+}^{\otimes |V_B|},
\end{equation}
where $E'$ are hyperedges in $H_\alpha$, which do not contain the vertex $a$.
Next step is to remove the vertex $B$ and all the hyperedges it is adjacent to:

\begin{equation}
\ket{\tau_B\circ \tau_{a}(\ket{H_B}))-\{B\}}= \prod_{e_i\in \mathcal{A}^{H_\alpha}(a)}  \prod_{\tilde e_i \in \tilde{E}} C_{\tilde{e}_i\cup a}C_{\tilde{e}_i\cup e_i}\prod_{{\tilde e}_j \in \tilde{E}, i<j}  C_{\tilde{e}_i\cup \tilde{e}_j}\prod_{e'\in E'}C_{e'}\ket{+}^{\otimes |V_B|-1}.
\end{equation}

And finally, the generalized local complementation over the vertex $a$ gives:
\begin{equation}
\ket{\tau_{a}(\tau_B\circ \tau_{a}(\ket{H_B}))-\{B\}}= \prod_{e'\in E'}C_{e'} \prod_{\tilde e_i \in \tilde{E}} C_{\tilde{e}_i\cup a}  \prod_{e_i\in \mathcal{A}^{H_\alpha}(a)}  C_{\tilde{e}_i\cup e_i}\ket{+}^{\otimes |V_B|-1}.
\end{equation}

  This expression exactly corresponds to the one in  Eq. (\ref{Eq:AppendixoutcomePlus}), the post-measurement state for the positive superposition.
  For the negative outcome we just fix the correction term $U_{x,-}$:
\begin{equation}
U_{x,-}\ket{\tau_{a}(\tau_B\circ \tau_{a}(\ket{H_B}))-\{B\}}= C_a\prod_{e'\in E'}C_{e'} \prod_{\tilde e_i \in \tilde{E}} C_{\tilde{e}_i\cup a}  \prod_{e_i\in \mathcal{A}^{H_\alpha}(a)}  C_{\tilde{e}_i\cup e_i} C_{e_i}\ket{+}^{\otimes |V_B|-1},
\end{equation}
  which exactly corresponds to the post-measurement state for negative superposition in Eq. (\ref{Eq:AppendixoutcomeMinus}).\end{proof}

\subsection{Examples of  Pauli-$X$ measurements on  hypergraph states}

Here we give examples of Pauli-$X$  measurements on hypergraph states.  In all of our examples exactly three vertices are measured in Pauli-$X$ bases. The post-measurement  states are derived by first expanding the hypergraph state over these three vertices  as shown in Eq.(\ref{Eq:boxAlgebra}), then checking if new emerging hypergraphs $H_\alpha$ and $H_\beta$ satisfy the condition of Theorem \ref{th:X_measurement_rule}. And only the final step if to apply the result of Theorem \ref{th:X_measurement_rule} to give the  post-measurement hypergraph states.

Here we only consider  three-uniform  hypergraph states and focus on  cases when post-measurement states are graph states regardless of the measurement outcomes, in general this is not the case. 

\textbf{Example 1:} The smallest  three-uniform hypergraph state which after measuring the first three qubits in Pauli-$X$ basis can deterministically project on  a Bell state is  (see  Fig. \ref{fig:5_qubit_state}) :
\begin{align}\label{eq:5_qubit_state}
\begin{split}
\ket{H_5}= C_{124}C_{125}C_{134}C_{135}C_{234}C_{235}\ket{+}^{\otimes 5}= \frac{1}{2\sqrt{2}}\bigg(&\big(\ket{000}+\ket{001}+\ket{010}+\ket{100}\big)\ket{+}^{\otimes 2}+\\
& \big(\ket{011}+\ket{110}+\ket{101}+\ket{111}\big)\ket{-}^{\otimes 2}\bigg).
\end{split}
\end{align}

The state $\ket{H_5}$ is given in the expanded form over vertices $1,2,3$ as in Eq.~(\ref{Eq:boxAlgebra}) and $\ket{H_\alpha}=\ket{+}^{\otimes 2}$ and  $\ket{H_\beta}=\ket{-}^{\otimes 2}=Z^{\otimes 2}\ket{+}^{\otimes 2}$.

We fix $a$ to be vertex $4$, $H_\alpha$ to have hyperedges $E_\alpha=\{\}$ and $H_\beta$ to have hyperedges $E_\beta=\{\{4\}\cup\tilde{E}\}$, where $\tilde{E}=\{\{5\}\}$. These two hypergraphs satisfy condition of Theorem \ref{th:X_measurement_rule}. So, measuring qubits $1,2,3$ in Pauli-$X$ basis gives two possible post-measurement hypergraph states $H_4\ket{H_{+}}\propto\ket{+}^{\otimes 2}+ \ket{-}^{\otimes 2}$ with the probability $1/5$  and $H_4\ket{H_{-}}\propto\ket{+}^{\otimes 2}+ \ket{-}^{\otimes 2}$ with the probability $4/5$. Using Theorem \ref{th:X_measurement_rule} we derive these post-measurement states:
\begin{equation}
H_4\ket{H_{+}}\propto H_4 \bigg(\ket{+}^{\otimes 2}+ \ket{-}^{\otimes 2}\bigg)\propto C_{45} \ket{+}^{\otimes 2}  \quad \mbox{ and } \quad H_4\ket{H_{-}}\propto H_4 \bigg(\ket{+}^{\otimes 2}- \ket{-}^{\otimes 2}\bigg)\propto C_{45}C_4 \ket{+}^{\otimes 2}. 
\end{equation}

\begin{figure}[!htb]
\begin{center}
\includegraphics[scale=0.75]{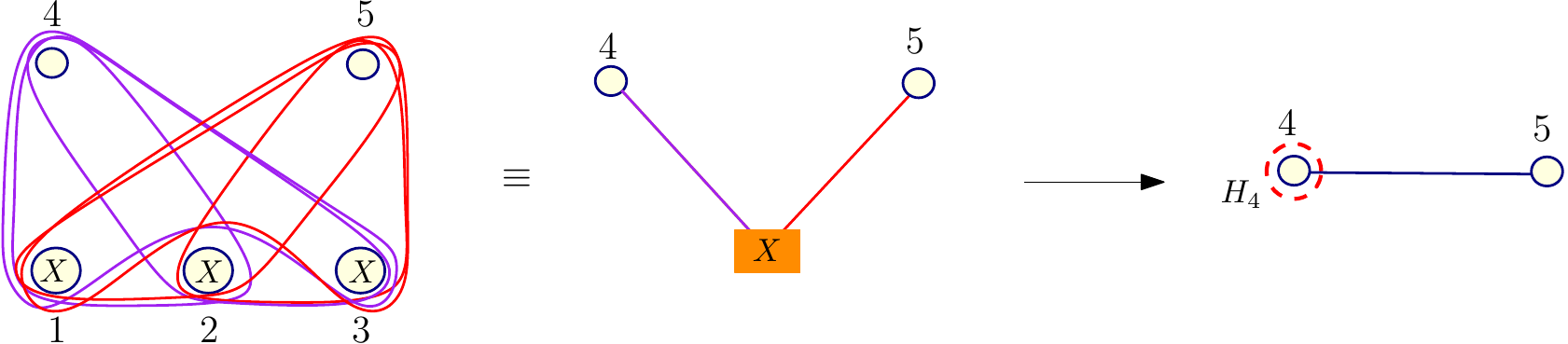}
\end{center}
\caption{The five-qubit three-uniform hypergraph state \cite{Wang2017} is the smallest hypergraph state with no usual graph edges which can project on a Bell state deterministically. It has  hyperedges $E=\{\{1,2,4\},\{1,2,5\},\{1,3,4\},\{1,3,5\},\{2,3,4\},\{2,3,5\}\}$.  The qubits $1,2,3$ are measured in $X$-basis and the post-measurement  state  is a graph state with a Hadamard correction on the vertex $4$. The  graph state is obtained with  unit probability  but up to Pauli-$Z_4$ byproduct. The probabilistic   Pauli-$Z_4$  is denoted by the dotted circle and it appears with the probability $4/5$ when the product of Pauli-$X$ measurement outcomes is $-1$.  }
\label{fig:5_qubit_state}
\end{figure}

\textbf{Example 2:} Let us consider the six-qubit hypergraph state $\ket{H_6}$ presented on Fig. \ref{fig:outcome1} (a). After measuring qubits $1,2,3$ in $X$-basis we  project on the three-qubit graph state. To see this, we write $\ket{H_6}$ directly in the expanded form  over vertices $1,2,3$:
\begin{equation}
\includegraphics[scale=1.0]{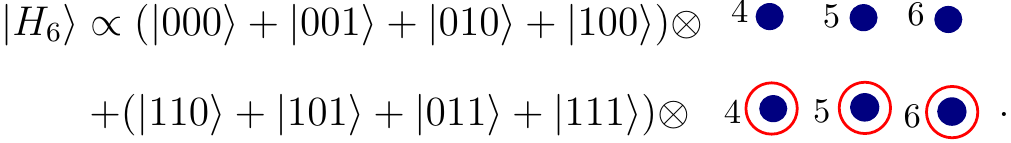}
\end{equation}

Here $H_\alpha$ has hyperedges $E_\alpha=\{\}$ and  $H_\beta$ has hyperedges $E_\beta=\{\{4\},\{5\},\{6\}\}$ and we fix to apply the Hadamard correction on the vertex $a=5$.  We can use  Theorem \ref{th:X_measurement_rule} to derive two post-measurement states upto Hadamard gate applied to the vertex $5$: 
\begin{equation}\label{Eq:minusoutcomewithoutedge}
 H_5 \ket{H_{+}}\propto C_{45}C_{56}\ket{+}^{\otimes 3} \quad \mbox{ and }\quad H_5 \ket{H_{-}}\propto C_{45}C_{56}C_5\ket{+}^{\otimes 3}.
\end{equation}

\textbf{Example 3:} Let us consider more complicated six-qubit  hypergraph state $\ket{H_6}$ presented on Fig. \ref{fig:outcome1} (b). We write this state expanded  over vertices $1,2,3$:
\begin{equation}
\includegraphics[scale=1.0]{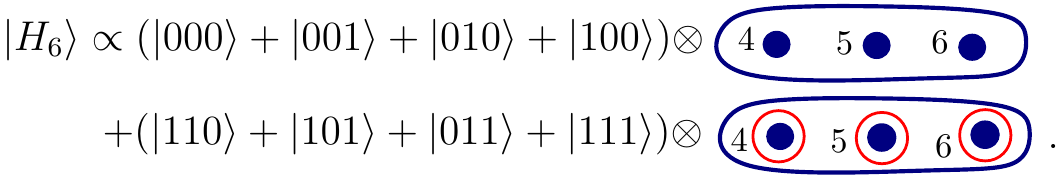}
\end{equation}
Here $H_\alpha$ has hyperedges $E_\alpha=\{\{1,2,3\}\}$ and  $H_\beta$ has hyperedges $E_\beta=\{\{1,2,3\},\{4\},\{5\},\{6\}\}$ and we fix to apply the Hadamard correction on the vertex $a=5$. We can use  Theorem \ref{th:X_measurement_rule} to derive two post-measurement states upto Hadamard gate applied to qubit $5$: 
\begin{equation}
  H_5 \ket{H_{+}}\propto C_{45}C_{56}\ket{+}^{\otimes 3} \quad \mbox{ and } \quad  H_5 \ket{H_{-}}\propto  C_{45}C_{56}C_{46}C_5\ket{+}^{\otimes 3}.
\end{equation} 

\textit{Remark.}  We can  increase the number of vertices  that we measure in Pauli-$X$ and  generalize  a notion of the box  defined  in the main text. The box that we considered up to now was corresponding to the structure of the  expanded  three vertices and was always connected to the rest of the hypergraph with three-qubit hyperedges. Now we try to extend this result to higher cardinality edges.  Let us expand a hypergraph state over $m$-qubits, where $3 \leq  m\leq N-2$ is an odd number, in the following way:
\begin{equation}
\ket{H_N}\propto  \bigg(\sum_x \ket{x} \bigg)\otimes \ket{H_\alpha}+ \bigg(\sum_y \ket{y} \bigg)\otimes \ket{H_\beta},
\end{equation}
where $x,y\in\{0,1\}^m$ and the first sum runs over all computational bases elements with the weight $w(x)\leq \left\lfloor m/2\right\rfloor$ and the second sum runs over all computational bases elements with the weight $w(y)> \left\lfloor m/2\right\rfloor$.

If all the first $m$ vertices are measured in Pauli-$X$ bases, then we again get two possible measurement outcomes $\ket{H_\alpha}\pm \ket{H_\beta}$. However, the box now can look very different from the $m=3$ case.

For simplicity let us fix $\ket{H_\alpha}=\ket{+}^{\otimes |N-m| }$ and $\ket{H_\beta}=\ket{-}^{\otimes |N-m| }$.  The smallest hyperedge the new type of a box is connected to the rest of the hypergraph has a cardinality equal to $ \left\lceil m/2\right\rceil+1$. But  in addition, for  some cases of $m$ with this construction the box will be connected to the rest of the hypergraph with different sizes of hyperedges. 

To illustrate this let us consider an example of $\ket{H_7}$, where $m=5$ and $\ket{H_\alpha}=\ket{+}^{\otimes 2}$ and $\ket{H_\alpha}=\ket{-}^{\otimes 2}$.
Then the smallest cardinality hyperedge in the hypergraph is of a size four -  the smallest weight of vector $\ket{y}$  is equal to  $ \left\lceil 5/2\right\rceil=3$ and plus  $1$.  However, these are not all the hyperedges in the hypergraph: The  vectors with the weight four are  in the second summand and they are tensored with $\ket{-}^{\otimes 2}$.   However, if we choose any four vertices among $m$,  then every three from them are connected to both vertices $m+1$ and $m+2$,  but $\binom{4}{3}=4$, which is an even number. So, the hypergraph must have additional cardinality $5$ edges. 
Similarly we have to check the weight of the last term in the sum:  $\binom{5}{3}+\binom{5}{4}=15$  is an odd number and, therefore,  there is no cardinality six edges in the hypergraph.  Therefore, similarly to $m=3$ case, we got a box containing five qubits but the box is connected to the rest of the hypergraph with four- and five- qubit hyperedges in a symmetric manner.

\section{ Implementation of $CCZ^{nn}$ gate}
Since we have chosen $\{CCZ,H\}$ to be the universal gate set, we need to show in detail how to implement these gates on our resource state. To start with, we implement $CCZ$ gates only on the nearest neighbor qubits (denote it by $CCZ^{nn}$) and therefore,  we need  \textit{SWAP} gate too. The goal is to implement all the gates deterministically. All Pauli measurements are made in one step but for simplicity  we consider them in several steps. At the step one the box is measured in Pauli-$Z$ basis. This evidently removes the box entirely and introduces Pauli-$Z$  byproducts on the vertices $1,2,3$ as presented on Fig.~\ref{fig:CCZNNsteps}.

On Fig.~\ref{fig:CCZNNsteps} at step 1  we first describe the measurements needed to get  $CCZ^{nn}$ gate using our resource state.  Now let us measure the vertex $4$ in Pauli-$Z$ basis, this effectively implements Pauli-$X$ measurement, since the Hadamard gate was applied to this vertex. We need to use Theorem \ref{th:X_measurement_rule} to derive a post-measurement state. Let us write the hypergraph state in the expanded form over the vertex $4$:  
\begin{equation}
\ket{H}=\frac{1}{\sqrt{2}}\big( \ket{0}_4 \ket{H_0}+ \ket{1}_4\ket{H_1}\big)=\frac{1}{\sqrt{2}}\big( \ket{0}_4 \ket{H_0}+ \ket{1}_4 Z_1 Z_7\ket{H_0}\big).
\end{equation}

\begin{figure}[t]
\includegraphics[scale=0.75]{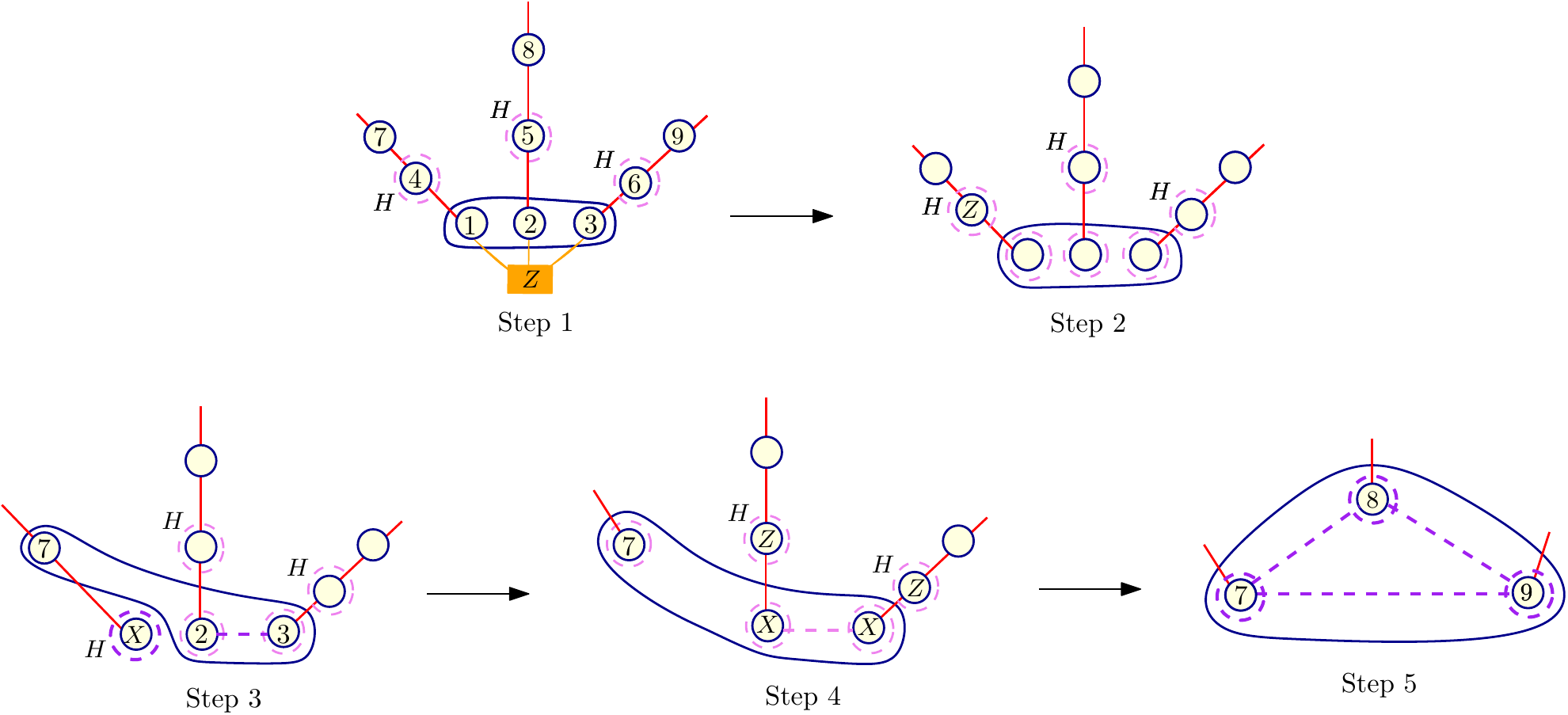}.
\caption{Implementing $CCZ^{nn}$ gate. All measurements are made simultaneously. We present them step by step to emphasize how $CZ$ byproducts come into the computational scheme.  Step 1: Pauli-$Z$measurement on the box  removes the box and introduces  Pauli-$Z$ byproducts on vertices $1,2,3$. Step 2: The vertex $4$ is measured in $X$-basis projecting on the state  at step 3. Step 3: We see the $CZ_{23}$ byproduct depending on the outcome of the measurement   on the vertex $4$. Measuring the vertex $1$ in Pauli-$Z$, projects on the hypergraph at step 4. Step 4: Repeating the measurements for vertices $2,3,5,6$ gives the state at step 5. }
\label{fig:CCZNNsteps}
\end{figure}

Then $\ket{H_0}$ and $\ket{H_1}$ satisfy the condition of Theorem \ref{th:X_measurement_rule} with a Hadamard applied on the vertex $1$  and accordingly the post-measurement state  is given on Fig.   \ref{fig:CCZNNsteps} at step 3. $CCZ$ gate is now applied to the vertices $2,3,7$.  At this step we have to point out that the post-measurement state has the edges $\{1\}$ $\{2,3\}$ for  the measurement outcome $"-1"$. Thus, this is where $CZ$ byproducts come into the computation scheme discussed in the main text.   At the step 3 the vertex $1$  is measured in Pauli-$X$ basis and since the Hadamard is applied to this qubits, this implements Pauli-$Z$ measurement instead.  Repeating this measurement pattern over as shown at the step 4 given the final state at the step 5, where $CCZ$ gate is applied to vertices $7,8,9$ upto $CZ$  and $Z$ byproducts.

\section{Discussion of the complexity}

Here we first give the proof for the gate identity from the main text. 

\begin{lemma}The following equality holds for any state $\ket{\psi}$ 
and sets $i\in e_1$ and  $i\in e_2$ :
\begin{equation}
C_{e_1}H_iC_{e_2}H_iC_{e_1}\ket{+}_i\ket{\psi}=\ket{+}_iC_{e_1\cup e_2 \backslash \{i\}}\ket{\psi}.
\end{equation}
\end{lemma}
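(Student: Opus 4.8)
The plan is to propagate the five-gate sequence across $\ket{+}_i\ket{\psi}$ directly, using the fact that every generalized $CZ$ gate touching qubit $i$ is a controlled operation on that qubit. Writing $f_1=e_1\setminus\{i\}$ and $f_2=e_2\setminus\{i\}$, the key structural observation is
\begin{equation}
C_{e_j}=\ketbra{0}_i\otimes\mathbbm{1}+\ketbra{1}_i\otimes C_{f_j},\qquad j=1,2,
\end{equation}
since the phase $-1$ is triggered only when qubit $i$ together with all remaining qubits of $e_j$ sit in $\ket{1}$. Because $i\in e_1\cap e_2$, the target edge on the right-hand side is $e_1\cup e_2\setminus\{i\}=f_1\cup f_2$, so everything reduces to tracking the operators $C_{f_1},C_{f_2}$ acting on $\ket{\psi}$.

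First I would apply the innermost $C_{e_1}$ to get $\tfrac{1}{\sqrt2}(\ket{0}_i\ket{\psi}+\ket{1}_iC_{f_1}\ket{\psi})$, then push through $H_iC_{e_2}H_i$ using $H\ket{0}=\ket{+}$, $H\ket{1}=\ket{-}$ and back, and finally apply the outer $C_{e_1}$ once more. This is pure bookkeeping of superpositions on qubit $i$ with operator-valued coefficients, and the only algebraic facts needed are that $C_{f_1},C_{f_2}$ are diagonal involutions, so $C_{f_1}^2=\mathbbm{1}$ and $[C_{f_1},C_{f_2}]=0$; these collapse the cross terms. The payoff is that, after the final $C_{e_1}$, the coefficients multiplying $\ket{0}_i$ and $\ket{1}_i$ become \emph{identical} (the $\ket{1}_i$ branch picks up an extra $C_{f_1}$, which the involution and commutativity relations absorb). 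The state therefore factorizes, and pulling out $\ket{+}_i$ leaves the common vector $\tfrac12(\mathbbm{1}+C_{f_1}+C_{f_2}-C_{f_1}C_{f_2})\ket{\psi}$.

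The decisive and, I expect, the only nontrivial step is to identify this operator with $C_{f_1\cup f_2}$. Substituting $C_{f_j}=\mathbbm{1}-2P_{f_j}$ with $P_{f_j}=\ketbra{1\dots1}$ supported on $f_j$, the four terms combine to $\mathbbm{1}-2P_{f_1}P_{f_2}$, so the claim hinges on recognizing that a product of two ``all-ones'' projectors is again an all-ones projector, $P_{f_1}P_{f_2}=P_{f_1\cup f_2}$ (consistent even when $f_1,f_2$ overlap, since the shared qubits are simply required to be $\ket{1}$ by both). Then $\tfrac12(\mathbbm{1}+C_{f_1}+C_{f_2}-C_{f_1}C_{f_2})=\mathbbm{1}-2P_{f_1\cup f_2}=C_{f_1\cup f_2}=C_{e_1\cup e_2\setminus\{i\}}$, giving the stated equality. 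A shorter alternative would be to verify the identity on computational-basis inputs $\ket{b}_i\ket{x}$ for $b\in\{0,1\}$, but the superposition route above has the advantage of exhibiting exactly where the intermediate $CZ$ structure is generated and then cancelled, which is the feature exploited in the $CCZ^{nn}$ gadget.
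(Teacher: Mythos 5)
Your proof is correct, but it takes a genuinely different route from the paper's. The paper conjugates the middle gate by the Hadamards to obtain a generalized $CNOT$, $H_iC_{e_2}H_i=CNOT_{e_2\backslash\{i\},\,i}$, expands $\ket{\psi}$ in the Pauli-$X$ product basis so that each term $\ket{+}_i\ket{j}$ is a hypergraph state, and then invokes the external graphical rule (from the local-complementation reference) for how a generalized $CNOT$ adds hyperedges of the form $e_t\cup C$; the two outer $C_{e_1}$'s then cancel as involutions. You instead never leave elementary operator algebra: you split each $C_{e_j}$ as $\ketbra{0}_i\otimes\mathbbm{1}+\ketbra{1}_i\otimes C_{f_j}$, track the two branches of qubit $i$ through the five gates, observe that the final $C_{e_1}$ makes the two operator-valued coefficients coincide (using only $C_{f_1}^2=\mathbbm{1}$ and commutativity of diagonal gates), and finish with the identity $\tfrac12(\mathbbm{1}+C_{f_1}+C_{f_2}-C_{f_1}C_{f_2})=\mathbbm{1}-2P_{f_1}P_{f_2}=C_{f_1\cup f_2}$, which I have checked and which holds including when $f_1\cap f_2\neq\emptyset$. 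What your version buys is self-containedness and an explicit exhibition of where the intermediate entangling structure appears and cancels, at the cost of not connecting to the hypergraph-theoretic $CNOT$ rule that the paper reuses elsewhere (and which generalizes to targets sitting in several hyperedges, where your single-edge bookkeeping would need the full four-term operator rather than collapsing so cleanly). The only caveat worth flagging is the degenerate case $e_j=\{i\}$, where $f_j=\emptyset$ and your decomposition forces $C_{\emptyset}=-\mathbbm{1}$; the algebra still goes through, but it deserves a remark since the convention is nonstandard.
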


\begin{proof}
Assume that $i=1$ and denote $e_1'\equiv e_1\backslash \{1\}$ and  $e_2'\equiv e_2\backslash \{1\}$, then $e_1\cup e_2 \backslash \{1\}=e_1'\cup e_2'$:
\begin{equation}
C_{e_1}H_iC_{e_2}H_iC_{e_1}\ket{+}_i\ket{\psi}=C_{\{1\}\cup e_1'}CNOT_{e_2',1}C_{\{1\}\cup e_1'}\ket{+}_1\ket{\psi}
\end{equation}
We can express an arbitrary multi-qubit  state $\ket{\psi}$ in Pauli-$X$ orthonormal basis $\ket{j}$:  $\ket{\psi}=\sum_j \phi_j\ket{j}$. Then each vector $\ket{+}_1\ket{j}$ is itself a hypergraph state. In Ref.~\cite{Gach2017} the action of a generalized $CNOT$ gate was described  on hypergraph states as we have already used in the previous sections:  Applying the generalized $CNOT_{Ct}$ gate to a hypergraph state,  where a set of control qubits $C$ controls the target qubit $t$,   introduces or deletes the set of edges  $E_t=\{e_t\cup C|e_t\in\mathcal{A}(t)\}$.  

\begin{figure}[t]
\begin{center}
\includegraphics[width=0.65\columnwidth]{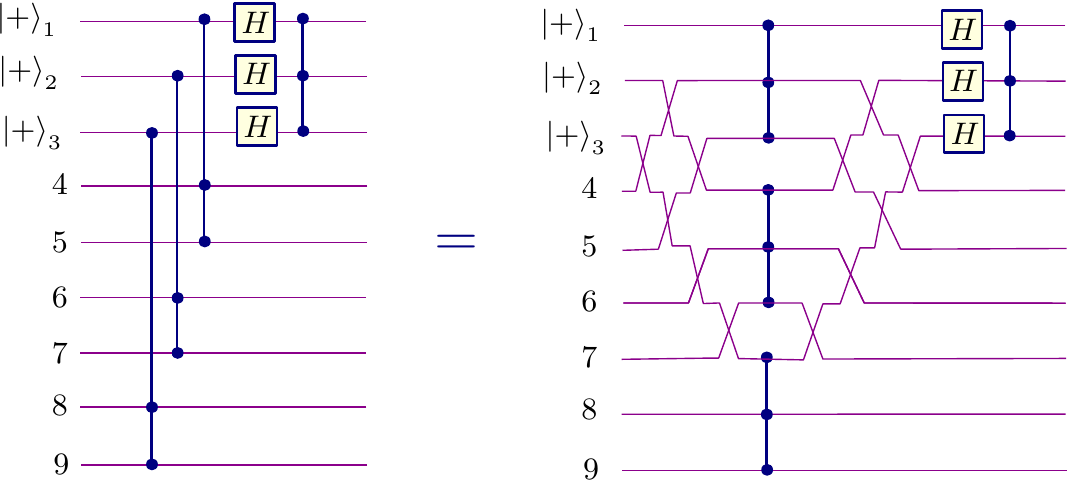}
\end{center}
\caption{(a) Circuit for $C^6Z$ gate using long-ranged $CCZ$ gates. (b) Implementation SWAP gate: Uses $9$ $CZ$ gates (brown edges) and 8 ancilla qubits.} 
\label{Fig:C6ZwithSWAPS}
\end{figure}

In our example the target qubit $t=1$ and  for each  hypergraph state $\ket{+}_1\ket{j}$  the target qubit $t=1$  is  in a single hyperedge ${C_{e_1}}$ only. Therefore from linearity follows that: 
\begin{equation}
C_{\{1\}\cup e_1'}CNOT_{e_2',1}C_{\{1\}\cup e_1'}\ket{+}_1\bigg(\sum_j\psi_j\ket{j}\bigg)=C_{\{1\}\cup e_1'}C_{e_2 \cup e_1'}C_{\{1\}\cup e_1'} \ket{+}_1\bigg(\sum_j\psi_j\ket{j}\bigg)=C_{e_2 \cup e_1'} \ket{+}_1\ket{\psi}
\end{equation}
\end{proof}

For an example let us  step-by-step consider the circuit in Fig.~\ref{fig:C6Z} in the main 
text implementing a $C^6Z$ gate:  
\begin{equation}
C_{145}H_1C_{123}H_1C_{145}\ket{+}_1\ket{+}_2\ket{+}_3\ket{\psi}_{456789}=C_{2345}\ket{+}_1\ket{+}_2\ket{+}_3\ket{\psi}_{456789}. 
\end{equation}
Applying the same identity one more time when we have a Hadamard on the second qubit (we omit the 
first qubit $\ket{+}_1$): 
\begin{equation}
C_{267}H_2 C_{2345}H_2C_{267}\ket{+}_2\ket{+}_3\ket{\psi}_{456789}
=C_{34567}\ket{+}_2\ket{+}_3\ket{\psi}_{456789}. 
\end{equation}
And finally, using the third qubit  $\ket{+}_3$ for the same identity, we get the $C^6Z$ gate 
(we again omit writing $\ket{+}_2$):
\begin{equation}
C_{389}H_3 C_{34567}H_3C_{389}\ket{+}_3\ket{\psi}_{456789}=\ket{+}_3C_{456789}\ket{\psi}_{456789}. 
\end{equation}
Measuring qubits $1,2,3$ in Pauli-$X$ bases, we get $C_{456789}=C^6Z$ gate being applied to the arbitrary state $\ket{\psi}_{456789}$.

\begin{figure}[t]
\includegraphics[width=0.5\columnwidth]{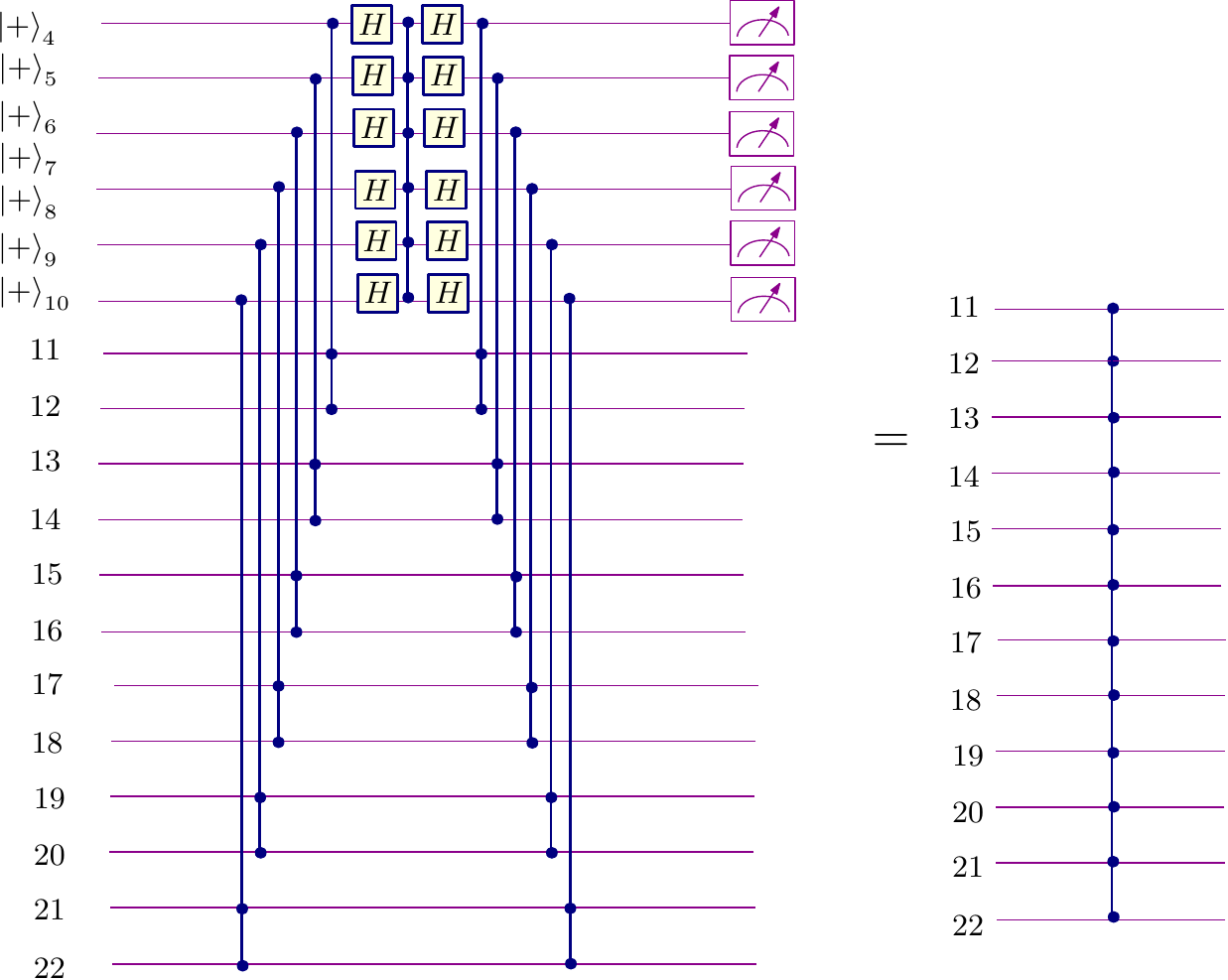}
\caption{The circuit identity implementing $C^{12}Z$ gate. Here the middle $C^6Z$ gate was created with the circuit on Fig.~\ref{fig:C6Z}. The procedure can be iterated to general  $C^NZ$ gate.}
\label{fig:C12Z}
\end{figure}

\begin{figure}[t]
\includegraphics[width=0.5\columnwidth]{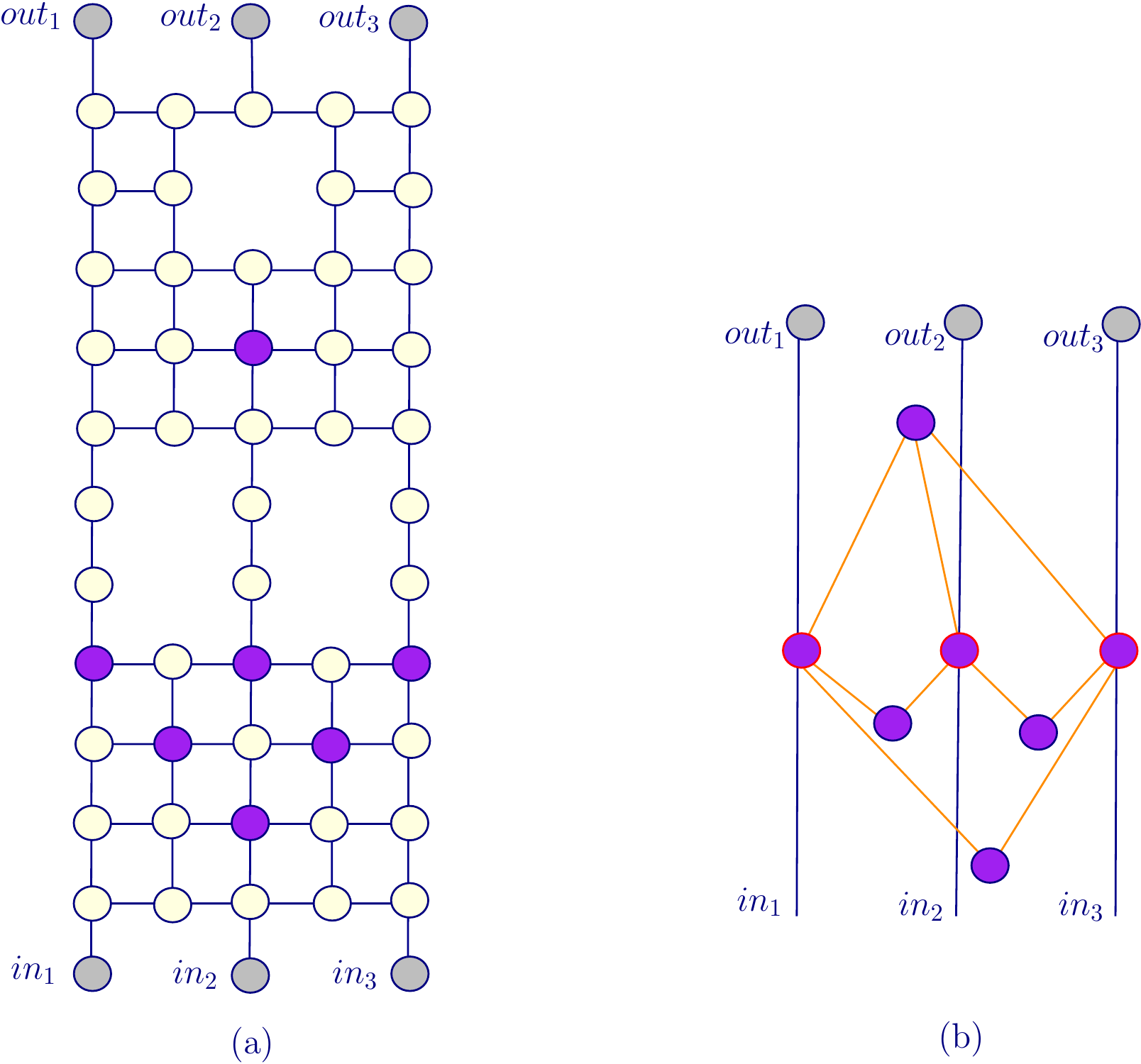}
\caption{
(a) The 55-qubit graph state given in Ref. \cite{Raussendorf2003}, which can implement three-qubit phase-gates. Six gray vertices are for input-output, seven dark purple vertices are measured in the second round of measurement, the rest is measured in Pauli-$X$ basis in the first round and the vertices which are already removed are measured in Pauli-$Z$ basis.  (b) The seven-qubit graph state obtained after Pauli measurements on (a)  capable of implementing a three-qubit phase-gates \cite{Raussendorf2003, Tsimakuridze2017}.}
\label{fig:Raussendorf}
\end{figure}

Next we count physical resource necessary to implement $C^NZ$ gate. We saw in Appendix B in Fig.~\ref{fig:CCZNNsteps}   that the minimal physical resource for $CCZ^{nn}$ gate is one physical $CCZ^{nn}$ gate, and three $CZ$ gates, represented by physical edges $\{\{1,4\}, \{2,5\},\{3,6\}\}$ and six ancilla qubits $\{1,2,3,4,5,6\}$.   The minimal physical resource for a \textit{SWAP} gate is nine  $CZ$ gates and eight ancilla qubits represented  in Fig.~\ref{Fig:C6ZwithSWAPS}  (b).  Number of total $CCZ^{nn}$ gates can be counted easily from the circuit, it also matches with number of Hadamard gates in the circuit plus one and for implementing $C^{3\cdot 2^r}$ gate  is equal to:
\begin{equation}
K_{CCZ}=3 \bigg(\sum_{k=1}^r 2^k\bigg)+1=2N-5.
\end{equation}

Here we count number of \textit{SWAP} gates needed. For $C^6Z$ we need twenty-four \textit{SWAP} gates. In general, to implement a $C^NZ$ gate with our protocol having already created  a $C^{N/2}Z$ gate,  we need $N(N-2)$ \textit{SWAP} gates.  So, in order to create $C^{N}Z$ gate we need 
to sum up \textit{SWAP} gates needed at all previous steps of iteration. If $N=3\cdot 2^r$, 
then there are totally $r=\log{(N/3)}$ iterations in our model from Observation \ref{obs:CZ}. 
To sum up, totally
\begin{equation}
K_{SWAP}=\sum_{k=1}^{r}(3\cdot 2^k)(3\cdot 2^k-2)=4N(\frac{N}{3}-1)
\end{equation}
\textit{SWAP} gates are needed. 

So, to sum up we need $K_{CCZ}=2N-5$ physical $CCZ^{nn}$  gates,  $3 K_{CCZ}+9K_{SWAP}=3(2N-5)+12N^2-36N=12N^2-30N-15$ physical $CZ$ gates, and   $6 K_{CCZ}+8K_{SWAP}=\frac{32}{3}N^2-20N-30$ physical qubits.

Next we look into the standard  protocol for creating the  $C^NZ$ gate using MBQC with cluster states.  In Ref.~\cite{Raussendorf2003} the 55-qubit cluster state is given to implement three-qubit phase-gates. Some of the vertices are missing from the cluster as they have been measured in Pauli-$Z$ basis (see Fig.~\ref{fig:Raussendorf} (a)  for the 55-qubit cluster state from Ref.~\cite{Raussendorf2003}). The gray qubits serve for input and output registers. The main idea of the protocol is to measure all the vertices displayed on Fig.~\ref{fig:Raussendorf}~(a) except the  dark purple ones  in the first round of measurements in Pauli-$X$ basis simultaneously.  Resulting post-measurement state up to Pauli-$Z$ byproducts is the seven-qubit graph state on Fig.~\ref{fig:Raussendorf} (b). Note the similarity of this graph with the graph in Fig.~5 of Ref.~\cite{Tsimakuridze2017}

We draw this graph state in the following way: The graph has  $\binom{3}{1}=3$ 
central vertices, which are connected to input-output wires , $\binom{3}{3}=1$ 
vertex adjacent to all the central qubits, and $\binom{3}{2}=3$ vertices, each 
adjacent to only two of the central vertices such that all  pairs from the central 
vertices are connected to distinct vertices. Totally, this makes 
$\binom{3}{1}+\binom{3}{2}+\binom{3}{3}=2^3-1=7$ qubits. Then, depending 
on the previous Pauli-$X$ measurement outcomes, each of these seven
qubits are measured in the two eigenbases of $U_Z(\pm\frac{\pi}{4})XU_Z(\pm\frac{\pi}{4})^\dagger$ creating a three-qubit phase-gate up to Pauli byproducts \cite{Raussendorf2003, Tsimakuridze2017}. 

If we extend this result for $C^4Z$ gate, the initial cluster state must be reduced to the graph 
state via Pauli measurements implemented in parallel. 
The structure of this graph is analogous to the one discussed for $C^3Z$ case. But now we need $\sum_{i=1}^{4}\binom{4}{i}=2^4-1$ qubits. From here one can see that to implement a $C^NZ$ gate in the standard way starting from the cluster state, one would only need to adapt measurement basis twice, which is constant for any $N$, but number of qubits one would require is $\sum_{i=1}^{N}\binom{N}{i}=2^N-1$ which is exponential  with the size of 
the gate implemented \cite{Tsimakuridze2017}. 

Let us look at the count of a physical qubits in  case our gate identity from Theorem \ref{obs:CZ}  is used on a cluster state.   As seen in Fig. \ref{fig:Raussendorf} (b) for the three-qubit phase-gate eight physical qubits are needed. The swap gate can be implemented as in Fig. \ref{Fig:C6ZwithSWAPS} (b), therefore needs  $8$ qubits. Therefore totally $8(K_{CCZ}+K_{SWAP})=\frac{32}{3}N^2-16N-40$ qubits are needed, which is polynomial in $N$. 
\mbox{ }
\vspace{0.3cm}
\mbox{ }
\twocolumngrid

\end{document}